\documentclass{article}
\usepackage[utf8]{inputenc}
\usepackage{a4wide}
\usepackage{xcolor}
\usepackage{amsmath}
\usepackage{lscape}
\usepackage[font=small,labelfont=bf]{caption}
\usepackage{graphicx}
\usepackage{subcaption}
\usepackage{comment}
\usepackage{tabularx}
\usepackage{multirow}
\usepackage{algpseudocode}
\usepackage{algorithm} 
\usepackage{array}
\usepackage{stfloats}
\usepackage{amsmath,amssymb,amsfonts,amsthm} 
{
\newtheorem{theorem}{Theorem}

\theoremstyle{remark}
\newtheorem {remark}{Remark}

}
\usepackage[utf8]{inputenc}
\usepackage{amssymb}
\usepackage{multicol}
\usepackage{geometry}
\usepackage{hyperref}
\usepackage{authblk}
\usepackage{ulem}
\usepackage{moreverb}
\usepackage{amsmath,bm}
\usepackage{graphicx}
\usepackage{siunitx}
\usepackage{subcaption}

\title{Differential Game Strategies for Social Networks with Self-Interested Individuals} 

\author{Hossein B. Jond}
\affil{Department of Cybernetics, Faculty of Electrical Engineering, Czech Technical University in Prague, Czech Republic}

\begin{document}

\maketitle

\begin{abstract}
A social network population engages in collective actions as a direct result of forming a particular opinion. The strategic interactions among the individuals acting independently and selfishly naturally portray a noncooperative game. Nash equilibrium allows for self-enforcing strategic interactions between selfish and self-interested individuals. This paper presents a differential game approach to the opinion formation problem in social networks to investigate the evolution of opinions as a result of a Nash equilibrium. The opinion of each individual is described by a differential equation, which is the continuous-time Hegselmann-Krause model for opinion dynamics with a time delay in input. The objective of each individual is to seek optimal strategies for her own opinion evolution by minimizing an individual cost function. Two differential game problems emerge, one for a population that is not stubborn and another for a population that is stubborn. The open-loop Nash equilibrium actions and their associated opinion trajectories are derived for both differential games using Pontryagin's principle. Additionally, the receding horizon control scheme is used to practice feedback strategies where the information flow is restricted by fixed and complete social graphs as well as the second neighborhood concept. The game strategies were executed on the well-known Zachary's Karate Club social network. The resulting opinion trajectories associated with the game strategies showed consensus, polarization, and disagreement in final opinions.
 
\textbf{Keywords:} Game theory; Hegselmann-Krause model; opinion dynamics; receding horizon; Zachary's Karate Club
\end{abstract}
\maketitle

\section{Introduction}\label{Int}
Opinion formation is one of the salient features of social networks. Users of a network engage through communication mediums and tools and fuse their opinions with those of other users. However, little is known regarding the strategies employed by network users. In recent decades, there has been increasing research attention on opinion dynamics which is the study of the evolution of public opinions~\cite{LIANG2016112,Friedkin}.

In many of the models proposed for modeling opinion dynamics, each social network user holds a numerical opinion and fuses that opinion through repeated averaging with her neighbors. Among those, the bounded confidence models of opinion dynamics have attracted more attention~\cite{DITTMER20014615,Hegselmann-Krause}. The bounded confidence concept was incorporated into opinion dynamics by Hegselmann and Krause~\cite{Hegselmann-Krause}. In the Hegselmann-Krause (HK) model, a finite number of agents fuse their opinions only with those of others whose opinions do not differ more than their confidence bound. The rationale behind the confidence bound concept is that a user adopts the opinions of those she considers trustworthy. Several works have studied the extensions of the HK model under various settings~\cite{Chen2020,Lorenz2008HeterogeneousBO,VASCA2021109683}. 

Game theory is a promising tool to model and analyze the strategic interactions among users with self-interest, conflict, cooperation, and competition traits in social groups from a mathematical point of view~\cite{osborne2009introduction}. In the context of a game, a user's opinion does not just evolve through a repeated averaging process. Instead, a user's strategy for fusing opinions with others is based on her personal optimization, which models the reward or loss for each possible combination of strategic interactions between her and her neighbors. Each user's strategies must be the best outcome for her optimization and meanwhile self-enforcing, taking other users' strategies into account. Nash equilibrium guarantees all users adopt a self-enforcing plan of action to comply with. The social actors complying with the Nash equilibrium move toward their neighbors’ opinions, but in general not sufficiently enough to achieve the globally minimum social cost~\cite{BINDEL2015248}.

In the past two decades, game theory has been used to study opinion dynamics in social networks. Static games~\cite{Bhawalkar2488615,Grabisch0853,GHADERI20143209,FERRAIOLI201696,Etesami} or evolutionary games~\cite{DI-MARE,DING20101745,HUANG2023113215,Wan9915429} were used in early works and most of the current literature. Nevertheless, only a few works have theoretically analyzed opinion dynamics using differential games and mean-field games. These classes of games were proven to provide mathematically rigorous models of opinion formation in social networks~\cite{Bauso140985676,Banez9754253,Wang2020RobustMF}. 

Opinion dynamics in the context of differential or dynamics games were noticed very recently. Opinion dynamics in a social network on graphs were treated with a dynamic game in~\cite{Jiang2021}. Feedback Nash equilibrium strategies for complete and incomplete communication graphs were found analytically and numerically, respectively. Competitive influence maximization in opinion dynamics using dynamic games was studied in~\cite{Etesami2021OpenLoopES}. An efficient algorithm solves the game for its open-loop equilibrium strategies. A differential game model of opinion dynamics with an open-loop information structure was introduced in~\cite{Niazi2021ADG}. The model was extended in~\cite{YildizO21} to investigate the opinion behavior of stubborn agents in a social network in the presence of a troll.

In this paper, I use noncooperative differential games to model and analyze the opinion dynamics of self-interested users in a social network. The opinion fusion rules are according to the HK opinion dynamics model and are restricted by a communication graph. In particular, I propose two differential game problem extensions to the HK opinion dynamics. One is for a social network that is not stubborn, and the other is for a network that is stubborn. In a stubborn social network, each user is trying to minimize her disagreement with her neighbors' opinions and with her own initial opinion. Stubborn users in social networks were the subject of extensive research~\cite{GHADERI20143209,TIAN2018213}. The explicit game strategies of Nash equilibrium and their corresponding opinion trajectories for both differential games were obtained analytically. Three receding horizon implementations of the open-loop game strategies were proposed to practice feedback strategies for the non-stubborn social network. 

The main contributions of this paper in comparison with closely related works in \cite{Niazi2021ADG,YildizO21} are fourfold. First, \cite{Niazi2021ADG,YildizO21} considered the single integrator dynamics model for opinion evolution, whereas I utilize the HK opinion dynamics model. Unlike the single integrator model, the HK model involves the topological structure of the social network. Second, the cost functions in \cite{Niazi2021ADG,YildizO21} involve minimizing disagreements in the network throughout the entire opinion formation process. In this work, disagreements in the final opinion are taken into account in the formulation of cost functions. The selection of cost functions in this manner leads to more compact explicit expressions for the Nash equilibrium and the opinion trajectories associated with it, which in turn facilitates acknowledging the emergence of distributed information fashion in the solution. Third, I consider an input time delay in the formulation of the differential games. The confidence bound concept in the HK opinion dynamics model is a matter of feedback strategies. Finally, I introduced the receding horizon implementation of open-loop game strategies to practice feedback strategies. Even though a model predictive control (MPC) scheme was used in~\cite{Wongkaew} to construct feedback control strategies for the HK opinion formation model in an optimal control framework, the MPC implementation required solving a sequence of open-loop optimality systems discretized by an appropriate Runge–Kutta scheme and solved by a nonlinear conjugate gradient method. The proposed receding horizon implementations use the explicit solutions that were obtained for the open-loop differential game, and therefore, the differential game is not solved repeatedly.

The paper is organized as follows. In Section~\ref{formulation}, I present two differential game models based on the HK opinion dynamics. In Section~\ref{main}, I derive the open-loop Nash equilibrium solution and the associated opinion trajectory with it for both differential game problems. The receding horizon implementations of the open-loop strategies were presented in Section~\ref{receding}. In Section~\ref{simulation}, the results from the previous sections are implemented on a real-world social network to observe the evolution of individuals' opinions when they are non-stubborn and stubborn. Conclusions and future works are discussed in Section~\ref{conclusions}.

\section{Differential Game Formulation of Opinion Dynamics}\label{formulation}

Consider a social network of $n$ agents indexed $1$ through $n$ on a communication graph $\mathcal{G}(\mathcal{V},\mathcal{E})$. The set of vertices $\mathcal{V}=\{1,\cdots,n\}$ corresponds to the set of agents. Each edge $(i,j) \in \mathcal{E}$ represents a mutual opinion flow between node $i$ and node $j$. The set of neighbors of vertex $i$ is denoted by $\mathcal{N}_i=\{j\in\mathcal{V}:(i,j) \mbox{ or }(j,i)\in \mathcal{E}, j\neq i\}$. I make the following assumption. The social graph $\mathcal{G}$ is connected. The connectivity of $\mathcal{G}$ means at least one globally reachable node (a root node of a spanning tree on the graph) and vice versa. In other words, $\mathcal{N}_i\neq \emptyset$ or $\lvert\mathcal{N}_i\rvert\neq 0$ for all $i\in\mathcal{V}$.

Let $x_i(t)\in \mathit{\mathbb{R}}$ be the opinion of agent $i$ at time $t\in[0,t_f]$ where $t_f$ is a terminal time. In the HK model, the evolution of $x_i(t)$ at each stage $k=0,1,\cdots$ is as follows
\begin{equation}\label{eq:HK}
    x_i(k+1)=\frac{1}{\lvert\mathcal{N}_i\rvert}\sum_{j\in\mathcal{N}_i}x_j(k).
\end{equation}
In this model, each agent's opinion at each stage is the average opinion of her graph neighbors. The graph neighbors here are equivalent to the bounded confidence concept in the original HK model. An agent with the confidence bound $\epsilon_i>0$ takes only those agents $j$ into account whose opinions differ from her own not bigger than her confidence bound. The set of such agents $j$ is denoted by $N_i(t)=\{j\in\mathcal{V}:\lvert x_i(t)-x_j(t)\rvert\leq \epsilon_i\}$ which in terms of a time-invariant social graph $\mathcal{G}(\mathcal{V},\mathcal{E})$, is equivalent to the set of the graph neighbors for agent $i$ as defined before, i.e., $\mathcal{N}_i$. 
Another interpretation for $\mathcal{N}_i$ could be the set of agents whom a particular agent $i$ trusts to share and fuse opinion~\cite{Babakhanbak}.

Define $u_i(t)\in \mathit{\mathbb{R}}$ as agent $i$'s influence effort or simply her control input at time $t\in[0,t_f]$. The HK model (\ref{eq:HK}) with the time-delayed control input in continuous time is given by
\begin{equation}\label{eq:dynamics0}
    \dot{x}_i(t)=\frac{1}{\lvert\mathcal{N}_i\rvert}\sum_{j\in\mathcal{N}_i}x_j(t)-x_i(t)+b_iu_i(t-\tau),
\end{equation}
where $\tau>0$ is an input delay and $b_i$ is a nonzero constant. 

The HK model applies to a non-stubborn social network whose population of agents does not hold any prejudices. The process of opinion formation in a rational, non-stubborn social network at the individual level can be subject to each agent trying to minimize her disagreement with her graph neighbors while, in the meantime, expending the least influence effort. An appropriate cost function that characterizes such behavior or preferences in a social network for agent $i\in \mathcal{V}$ to minimize is  
\begin{equation} \label{eq:quadratic-cost0-nonstub}
J_i= \frac{1}{\lvert\mathcal{N}_i\rvert}\sum_{j\in\mathcal{N}_i}\big(x_i(t_f)-x_j(t_f)\big)^2+\int_{0}^{t_f} r_iu_i^2(t-\tau)~\mathrm{dt}, 
\end{equation}
where $r_i\in \mathit{\mathbb{R}}$ is a positive scalar ($r_i>0$). The cost function in (\ref{eq:quadratic-cost0-nonstub}) has two terms. The first term averages the sum of disagreements between the final opinions of each agent and her graph neighbors. The second term is the weighted control or influence effort made during the entire opinion formation process. In this work, I refer to the optimization problem in (\ref{eq:dynamics0}) and (\ref{eq:quadratic-cost0-nonstub}) as the non-stubborn optimization of the continuous-time HK opinion dynamics model.  

Suppose that the agents of the social network are stubborn, meaning that they take their prejudices into account in the process of opinion formation. The initial opinions $x_{i0}$ are referred to as the agents' prejudices. Agents with prejudices (i.e., $x_{i0}\neq 0$) are stubborn. In the course of optimization, stubborn agents are trying to minimize their disagreement with their (graph) neighbors as well as their terminal time opinion deviation from their initial opinion. The cost function for a stubborn agent $i\in \mathcal{V}$ to minimize is
\begin{align} \label{eq:quadratic-cost0-stub}
\Tilde{J}_i&= \omega_i\big(x_i(t_f)-x_{i0}\big)^2+\frac{1-\omega_i}{\lvert\mathcal{N}_i\rvert}\sum_{j\in\mathcal{N}_i}\big(x_i(t_f)-x_j(t_f)\big)^2+\int_{0}^{t_f} r_iu_i^2(t-\tau)~\mathrm{dt}, 
\end{align}
where $0\leq\omega_i\leq 1$ is the stubbornness coefficient. A larger $\omega_i$ indicates agent $i$'s stronger stubbornness to her prejudices. Note that when $\omega_i=0$, (\ref{eq:quadratic-cost0-stub}) reduces to (\ref{eq:quadratic-cost0-nonstub}) while for $\omega_i=1$, it means at the terminal time the agent only is trying to minimize her final opinion deviation from her initial opinion. An agent with $\omega_i=1$ is totally stubborn. I refer to the optimization problem in (\ref{eq:dynamics0}) and (\ref{eq:quadratic-cost0-stub}) as the stubborn optimization of the continuous-time HK opinion dynamics model.  

In the control community, both optimization problems that emerged in this work are known as differential game problems \cite{engwerda2005lq}. 
In the context of a differential game, each agent of the network is referred to as a player. In this context, each player seeks the control $u_i(t)$ that minimizes her cost function $J_i$ or $\Tilde{J}_i$ with the given initial opinions, subject to the continuous-time HK opinion evolution equation (\ref{eq:dynamics0}). In other words, the players in the game seek to minimize their cost functions in order to find their control or influence strategies $u_i(t)$ while their opinions evolve according to the differential equation (\ref{eq:dynamics0}). The behavior of self-interested players in the game of opinion formation is best reflected via noncooperative game theory. Under the framework of noncooperative games, the players can not make binding agreements, and therefore, the solution (i.e., the Nash equilibrium) has to be self-enforcing, meaning that once it is agreed upon, nobody has the incentive to deviate from~\cite{van1983refinements}. Throughout the paper, I use the words user, agent, player, and individual interchangeably.

\section{Open-loop Nash Equilibrium}\label{main}

Nash equilibrium is the main solution concept in noncooperative game scenarios. A Nash equilibrium is a strategy combination of all players in the game with the property that no one can gain a lower cost by unilaterally deviating from it. The open-loop Nash equilibrium is defined as a set of admissible actions ($u_1^*,\cdots,u_n^* $) if for all admissible ($u_1,\cdots,u_n$) the inequalities
$J_i (u_1^*,\cdots,u_{i-1}^*,u_i^*\\,u_{i+1}^*,\cdots,u_n^* )\leq
J_i (u_1^*,\cdots,u_{i-1}^*,u_i,u_{i+1}^*,\cdots,u_n^* )$
hold for $i\in\{1,\cdots,n\}$ where $u_i\in\Gamma_i$ and $\Gamma_i$ is the admissible strategy set for player $i$. 
The noncooperative differential game and the unique Nash equilibrium associated with it are discussed in~\cite{engwerda2005lq}. 

In the following subsections, I present the open-loop Nash equilibrium solutions and the associated opinion trajectories with the equilibrium actions for both of the previously introduced optimization problems. Before that, I define the following vectors and matrices to restate the optimization problems in a compact form. These matrix definitions are borrowed from the standard literature on graph theory.

Define $A_i=[a_{ij}]$ where
\begin{equation*}
       a_{ij} =
\left\{
	\begin{array}{ll}
		1  & \text{if } (i,j)\in\mathcal{E} \text{ and } i\neq j,\\
		0  & \text{if } (i,j)\notin\mathcal{E} \text{ and } i\neq j, \\
            0 & \text{if } i=j, 
	\end{array}
\right.
\end{equation*}
to be the adjacency matrix for each agent $i\in\mathcal{V}$. The degree matrix is $D_i=\mathrm{diag}(0,\cdots,\lvert\mathcal{N}_i\rvert,\cdots,0)$ where "$\mathrm{diag}{·}$" stands for diagonal matrix. The graph Laplacian matrix for each agent $i$ is defined as
\begin{equation}\label{eq:Laplacian}
    L_i=D_i-A_i.
\end{equation}
The global adjacency, degree, and Laplacian matrices are $A=\sum_i A_i$,  $D=\sum_i D_i$, and $L=\sum_i L_i$. All the aforementioned matrices are symmetric. Additionally, let $W_i=\mathrm{diag}(0,\cdots,\omega_i,\cdots,0)$ and $W=\sum_i W_i$. Define vectors $x(t)=[x_1(t),\cdots,x_n(t)]^\top$, $B_i=[0,\cdots,b_i,\cdots,0]^\top$, and matrix $\Lambda=D^{-1}A-I$.

\subsection{Non-stubborn social network}
The non-stubborn optimization in (\ref{eq:dynamics0}) and (\ref{eq:quadratic-cost0-nonstub}) is restated in compact form as follows
\begin{align}\label{eq:nonstub-opt}
    &\quad\min_{u_i}~  J_i(u_i(t-\tau))= \frac{1}{\lvert\mathcal{N}_i\rvert}x^\top(t_f)L_ix(t_f)+\int_{0}^{t_f} r_iu_i^2(t-\tau)~\mathrm{dt},\quad i=1,\ldots,n, \\
    &{s.t.}\notag\\
    &\quad\dot{x}(t)=\Lambda x(t)+\sum_{i=1}^n B_iu_i(t-\tau), \quad x(0)=x_0, \nonumber
\end{align}

The presence of the Laplacian $L_i$ in the cost function above is due to its sum-of-squares property (see~\cite{Jond}).
\begin{equation*}
    x^\top(t_f) L_ix(t_f)=\sum_{j\in\mathcal{N}_i}\big(x_i(t_f)-x_j(t_f)\big)^2.
\end{equation*}

\begin{theorem}\label{theorem:nonstub}
 The unique Nash equilibrium actions and the associated opinion trajectory with these equilibrium actions for opinion formation in a non-stubborn social network as the noncooperative differential game in (\ref{eq:nonstub-opt}) are given by 
 \begin{align}
         &u_i(t)=-\frac{1}{r_i\lvert\mathcal{N}_i\rvert}\hat{B}_i^\top\mathrm{e}^{(t_f-t)\Lambda^\top}\hat{L}_iH^{-1}(t_f-\tau)\mathrm{e}^{(t_f-2\tau)\Lambda}x(\tau), \label{eq:Nash-nonstub}\\
         &x(t+\tau)=\Big(\mathrm{e}^{t\Lambda}-\mathrm{e}^{\tau\Lambda}\Psi(t)\Delta H^{-1}(t_f-\tau)\mathrm{e}^{(t_f-2\tau)\Lambda}\Big)x(\tau),  \label{eq:Nash-tr-nonstub} 
 \end{align}
 where
 \begin{align}
     &H(t_f-\tau)=I+\Psi(t_f-\tau)\Delta, \label{eq:matrix-H}\\ 
     &\Psi(t)=[\Psi_1(t),\cdots,\Psi_n(t)],\label{eq:matrix-Psi}\\
     &\Delta=[ \frac{1}{\lvert\mathcal{N}_1\rvert}\hat{L}_1,\cdots, \frac{1}{\lvert\mathcal{N}_n\rvert}\hat{L}_n]^\top,\label{eq:matrix-Delta} \\
     & \Psi_i(t,0)=\int_0^{t}\mathrm{e}^{(t-s)\Lambda}S_i\mathrm{e}^{(t-s)\Lambda^\top}~\mathrm{ds}, \label{eq:matrix-Psi-i} \\
     &S_i= \frac{1}{r_i}\hat{B}_i\hat{B}_i^\top,\quad\hat{B}_i=\mathrm{e}^{-\tau\Lambda}B_i,\quad\hat{L}_i=\mathrm{e}^{\tau\Lambda^\top} L_i\mathrm{e}^{\tau\Lambda}\label{eq:notation}.
 \end{align}
\end{theorem}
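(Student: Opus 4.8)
The plan is to treat each player's problem, for fixed strategies of the others, as a standard finite-horizon linear-quadratic optimal control problem, and then to assemble the $n$ resulting stationarity conditions into a single linear system that pins down the equilibrium. The first step is to dispose of the input delay. Since the control enters only through $u_i(t-\tau)$ and the history on $[-\tau,0]$ carries no decision (so $x(\tau)=\mathrm{e}^{\tau\Lambda}x_0$), the substitution $z(t)=x(t+\tau)$ turns (\ref{eq:nonstub-opt}) into a delay-free game on the horizon $[0,t_f-\tau]$ with initial state $z(0)=x(\tau)$, dynamics $\dot z=\Lambda z+\sum_j B_j u_j$, control penalty $\int_0^{t_f-\tau}r_iu_i^2\,\mathrm{ds}$, and terminal cost $\tfrac{1}{\lvert\mathcal N_i\rvert}z^\top(t_f-\tau)L_iz(t_f-\tau)$, using the sum-of-squares identity for $L_i$ already recorded above. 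The delay therefore only shortens the effective horizon and shifts the reference point to $x(\tau)$; the factors $\mathrm{e}^{\pm\tau\Lambda}$ it leaves behind are exactly what the hatted quantities $\hat B_i,\hat L_i,S_i$ in (\ref{eq:notation}) are designed to absorb.

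Next, for each player I would form the Hamiltonian $\mathcal H_i=r_iu_i^2+\lambda_i^\top(\Lambda z+\sum_jB_ju_j)$ and apply Pontryagin's principle. Stationarity $\partial\mathcal H_i/\partial u_i=0$ makes $u_i$ proportional to $B_i^\top\lambda_i$, while the costate obeys $\dot\lambda_i=-\Lambda^\top\lambda_i$ with terminal value proportional to $L_i z(t_f-\tau)=L_ix(t_f)$; the two factors of two arising from the quadratics cancel and leave the prefactor $\tfrac{1}{r_i\lvert\mathcal N_i\rvert}$. Because $r_i>0$ and $L_i\succeq 0$, each player's subproblem is strictly convex in $u_i$, so these conditions are both necessary and sufficient and the best response is unique. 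Integrating the costate backward gives $\lambda_i(t)=\mathrm{e}^{(t_f-\tau-t)\Lambda^\top}\lambda_i(t_f)$, so every $u_i$ is expressed through the single common unknown terminal state $x(t_f)$, of the form $\hat B_i^\top\mathrm{e}^{(t_f-t)\Lambda^\top}L_ix(t_f)$ up to the $\mathrm{e}^{\pm\tau\Lambda}$ factors that the identity $B_i=\mathrm{e}^{\tau\Lambda}\hat B_i$ folds into $\hat B_i$ and $\hat L_i$.

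To close the loop, I would substitute these controls into the variation-of-constants solution evaluated at the terminal time. This produces a linear equation of the form $\big(I+\sum_j\tfrac{1}{\lvert\mathcal N_j\rvert}\,\Gamma_jL_j\big)x(t_f)=\mathrm{e}^{(t_f-\tau)\Lambda}x(\tau)$, where each controllability Gramian is $\Gamma_j=\int_0^{t_f-\tau}\mathrm{e}^{(t_f-\tau-s)\Lambda}\tfrac{1}{r_j}B_jB_j^\top\mathrm{e}^{(t_f-\tau-s)\Lambda^\top}\mathrm{ds}$; rewriting $B_j$ and $L_j$ in hatted form rearranges the coefficient matrix into $H(t_f-\tau)=I+\Psi(t_f-\tau)\Delta$ with $\Psi,\Delta,\Psi_i,S_i$ exactly as in (\ref{eq:matrix-H})--(\ref{eq:notation}). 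Inverting $H$ gives $x(t_f)$ in closed form, and back-substitution into the control law and into $z(t)=x(t+\tau)$ delivers (\ref{eq:Nash-nonstub}) and (\ref{eq:Nash-tr-nonstub}) after collecting the surviving $\mathrm{e}^{\pm\tau\Lambda}$ factors.

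The main obstacle I expect is twofold. The first is carrying the matrix-exponential bookkeeping through the hatted change of variables without dropping an $\mathrm{e}^{\pm\tau\Lambda}$, since $\Lambda$ and $L_i$ do not commute and the shifts appear in several places. The second, and the genuinely substantive point, is establishing that $H(t_f-\tau)$ is nonsingular, which is what upgrades ``a stationary point exists'' to ``the Nash equilibrium is unique.'' For this I would exploit the similarity $H=\mathrm{e}^{-\tau\Lambda}\big(I+\sum_j\tfrac{1}{\lvert\mathcal N_j\rvert}\Gamma_jL_j\big)\mathrm{e}^{\tau\Lambda}$ and argue invertibility from the positive semidefiniteness of the Gramians $\Psi_i$ and of $\hat L_i$ (congruent to the PSD Laplacian $L_i$), so that $H$ is a strict perturbation of the identity; alternatively I would invoke the open-loop LQ-game existence and uniqueness theory of~\cite{engwerda2005lq}. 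Uniqueness of the equilibrium then follows because the strictly convex best responses admit a single common fixed point, namely the solution of the now invertible linear system.
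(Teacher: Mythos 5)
Your proposal is correct in outline and follows essentially the same route as the paper's proof: eliminate the input delay by a change of variables giving a delay-free LQ game on $[0,t_f-\tau]$, apply Pontryagin's principle player by player, stack the $n$ terminal costate conditions into one linear system, and close the equilibrium by inverting $H(t_f-\tau)=I+\Psi(t_f-\tau)\Delta$. Your pure time shift $z(t)=x(t+\tau)$ is an equivalent, slightly cleaner version of the paper's transformation $y(t)=\mathrm{e}^{-\tau\Lambda}x(t+\tau)$ (the two differ by the constant factor $\mathrm{e}^{\tau\Lambda}$, which is exactly where the hatted matrices come from), and your similarity identity $H(t_f-\tau)=\mathrm{e}^{-\tau\Lambda}\bigl(I+\sum_j\tfrac{1}{\lvert\mathcal N_j\rvert}\Gamma_jL_j\bigr)\mathrm{e}^{\tau\Lambda}$ is exactly right.

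The one step that would fail is your primary argument for nonsingularity of $H(t_f-\tau)$. Each product $\Gamma_jL_j$ of two positive semidefinite matrices does have nonnegative real spectrum, but this property is not preserved under sums: the summands are non-normal, and one can construct PSD pairs for which $\sum_j\tfrac{1}{\lvert\mathcal N_j\rvert}\Gamma_jL_j$ has an eigenvalue equal to $-1$, so "a perturbation of the identity by PSD-built terms" is not invertible in general. This is not a technicality; in open-loop LQ games, existence and uniqueness of a Nash equilibrium genuinely depends on the horizon and the weights, which is why \cite{engwerda2005lq} states it as an invertibility condition rather than proving it outright. Your fallback — simply conditioning on invertibility of $H$ — is the correct move, and it is also all the paper itself does (its proof reads: if $H(t_f-\tau)$ has an inverse, the unique equilibrium actions exist). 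One further caution on the bookkeeping you flagged: with your shift, the costate of the reduced problem propagates as $\lambda_i(t)=\mathrm{e}^{(t_f-\tau-t)\Lambda^\top}\lambda_i(t_f-\tau)$, i.e., the terminal instant is $t_f-\tau$, not $t_f$; the exponent $(t_f-t)$ appearing in (\ref{eq:Nash-nonstub}) is reconciled with this only after the identity $B_i^\top\mathrm{e}^{(t_f-\tau-t)\Lambda^\top}=\hat B_i^\top\mathrm{e}^{(t_f-t)\Lambda^\top}$ is used, so keep the two time parametrizations strictly separate when collecting the $\mathrm{e}^{\pm\tau\Lambda}$ factors, or you will appear to disagree with the stated formulas by a stray $\mathrm{e}^{\tau\Lambda^\top}$.
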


\begin{proof}
Before proceeding with applying the necessary optimality conditions, one should address the input delay. A common approach to dealing with the input-delayed linear dynamical system in (\ref{eq:nonstub-opt}) is to apply an integral transformation to obtain a delay-free linear dynamical system~\cite{Yusheng}. Below, I convert the input-delayed differential game in (\ref{eq:nonstub-opt}) to a delay-free differential game problem.

The cost function in (\ref{eq:nonstub-opt}) can be decomposed by
\begin{align} \label{eq:quadratic-cost0-nonstub0}
J_i(u_i(t))&= \frac{1}{\lvert\mathcal{N}_i\rvert}x^\top(t_f)L_ix(t_f)+\int_{0}^{\tau} r_iu_i^2(t)~\mathrm{dt}+\int_{0}^{t_f-\tau} r_iu_i^2(t)~\mathrm{dt}, 
\end{align}
The middle integral term on the right-hand side in (\ref{eq:quadratic-cost0-nonstub0}) is a constant since the control does not take place for $t\in [0, \tau[$. Therefore, the minimization of $J_i$ is equivalent to the minimization of
\begin{equation} \label{eq:quadratic-cost0-nonstub1}
J_i(u_i(t))= \frac{1}{\lvert\mathcal{N}_i\rvert}x^\top(t_f)L_ix(t_f)+\int_{0}^{t_f-\tau} r_iu_i^2(t)~\mathrm{dt}. 
\end{equation}

The solution of the linear system in (\ref{eq:nonstub-opt}) at $t+\tau$ is given by
\begin{align} \label{eq:transform-dynamics0}
    x(t+\tau)&=\mathrm{e}^{(t+\tau)\Lambda}x_0\nonumber+\int_0^{t+\tau}\mathrm{e}^{(t+\tau-s)\Lambda}\sum_{i=1}^n B_iu_i(s-\tau)~\mathrm{ds} \nonumber \\
  &=\mathrm{e}^{\tau\Lambda}\Big(\mathrm{e}^{t\Lambda}x_0+\int_0^{t}\mathrm{e}^{(t-s)\Lambda}\sum_{i=1}^n B_iu_i(s-\tau)  ~\mathrm{ds}+\int_t^{t+\tau}\mathrm{e}^{(t-s)\Lambda}\sum_{i=1}^n B_iu_i(s-\tau)~\mathrm{ds}\Big).
\end{align}
The first and second terms inside the outer bracket pair on the right-hand side in (\ref{eq:transform-dynamics0}) denote the solution of the linear system in (\ref{eq:nonstub-opt}) at $t$. Therefore, (\ref{eq:transform-dynamics0}) can be rewritten as
\begin{align}\label{eq:transform-dynamics}
x(t+\tau)&=\mathrm{e}^{\tau\Lambda}\left(x(t)+\int_t^{t+\tau}\mathrm{e}^{(t-s)\Lambda}\sum_{i=1}^n B_iu_i(s-\tau)~\mathrm{ds}\right) \nonumber \\&=\mathrm{e}^{\tau\Lambda}\left(x(t)+\int_{t-\tau}^{t}\mathrm{e}^{(t-s-\tau)\Lambda}\sum_{i=1}^n B_iu_i(s)~\mathrm{ds}\right)\nonumber \\&=\mathrm{e}^{\tau\Lambda} y(t),
\end{align}
where $y(t)$ is a delay-free transformation. 
By letting $t=0$ and $t=t_f-\tau$, we obtain  $\mathrm{e}^{-\tau\Lambda}x(\tau)= y(0)$ and $x(t_f)= \mathrm{e}^{\tau\Lambda}y(t_f-\tau)$, respectively. By substituting (\ref{eq:transform-dynamics}) into the cost function in (\ref{eq:nonstub-opt}), I have
\begin{align} \label{eq:quadratic-cost0-nonstub-f}
J_i(u_i(t))&= \frac{1}{\lvert\mathcal{N}_i\rvert}\big(\mathrm{e}^{\tau\Lambda}y(t_f-\tau)\big)^\top L_i\big(\mathrm{e}^{\tau\Lambda}y(t_f-\tau)\big)+\int_{0}^{t_f-\tau} r_iu_i^2(t)~\mathrm{dt} \nonumber\\
&= \frac{1}{\lvert\mathcal{N}_i\rvert}y^\top(t_f-\tau)\big(\mathrm{e}^{\tau\Lambda^\top} L_i\mathrm{e}^{\tau\Lambda}\big)y(t_f-\tau)+\int_{0}^{t_f-\tau} r_iu_i^2(t)~\mathrm{dt}. 
\end{align}

The time derivative of (\ref{eq:transform-dynamics}) is given by
\begin{align*}
    \mathrm{e}^{\tau\Lambda}\dot{y}(t)=\dot{x}(t+\tau) 
    =\Lambda x(t+\tau)+\sum_{i=1}^n B_iu_i(t),
\end{align*}
or equivalently,
\begin{align}\label{eq:transform-dynamics1}
    \dot{y}(t)=\mathrm{e}^{-\tau\Lambda}\Lambda x(t+\tau)+\sum_{i=1}^n \mathrm{e}^{-\tau\Lambda}B_iu_i(t).
\end{align} 
Substituting $x(t+\tau)=\mathrm{e}^{\tau\Lambda} y(t)$ into (\ref{eq:transform-dynamics1}) finally yields the delay-free dynamics 
\begin{align}\label{eq:delayfree-dynamics}
    \dot{y}(t)&=\mathrm{e}^{-\tau\Lambda}\Lambda \mathrm{e}^{\tau\Lambda}y(t)+\sum_{i=1}^n \mathrm{e}^{-\tau\Lambda}B_iu_i(t).
\end{align}

Note that if matrices $A_0$ and $B_0$ commute, i.e., $A_0B_0=B_0A_0$, then $B_0\mathrm{e}^{A_0}=\mathrm{e}^{A_0}B_0$ and $\mathrm{e}^{A_0}\mathrm{e}^{B_0}=\mathrm{e}^{A_0+B_0}$ (see Fact 11.14.2. in~\cite{bernstein2009matrix}). Thereby, (\ref{eq:delayfree-dynamics}) is simplified as
\begin{align}\label{eq:delayfree-dynamics-f}
    \dot{y}(t)&=\Lambda y(t)+\sum_{i=1}^n \mathrm{e}^{-\tau\Lambda}B_iu_i(t).
\end{align}

Using the new state dynamics (\ref{eq:delayfree-dynamics-f}), cost function (\ref{eq:quadratic-cost0-nonstub-f}) and matrix notations in (\ref{eq:notation}), the new optimization problem becomes 
\begin{align}\label{eq:nonstub-opt-new}
    &\quad \min_{u_i}~  J_i(u_i(t))= \frac{1}{\lvert\mathcal{N}_i\rvert}y^\top(t_f-\tau)\hat{L}_iy(t_f-\tau)+\int_{0}^{t_f-\tau} r_iu_i^2(t)~\mathrm{dt},\quad i=1,\ldots,n,\\
    &{s.t.}\notag\\
    &\quad \dot{y}(t)=\Lambda y(t)+\sum_{i=1}^n \hat{B}_iu_i(t), \quad y(0)=\mathrm{e}^{-\tau\Lambda}x(\tau). \nonumber
\end{align}

Now, I apply the conditions obtained from Pontryagin’s principle to the delay-free optimization (\ref{eq:nonstub-opt-new}).

Define the Hamiltonian 
\begin{equation}\label{eq:Hamiltonian}
        \mathcal{H}_i=r_iu_i^2(t)+\lambda_i^\top(t)\Big(\Lambda y(t)+\sum_{i=1}^n \hat{B}_iu_i(t)\Big), 
\end{equation}
where $\lambda_i(t)$ is a co-state vector.
According to Pontryagin’s principle, the necessary conditions for optimality are $\frac{\partial \mathcal{H}_i}{\partial u_i}=0$ and $\dot{\lambda}_i(t)=-\frac{\partial \mathcal{H}_i}{\partial y}$. Applying the necessary conditions on the Hamiltonian yield
\begin{align}
    &u_i(t)=-\frac{1}{r_i}\hat{B}_i^\top\lambda_i(t), \label{eq:neccesary-u}\\
    &\dot{\lambda}_i(t)=-\Lambda^\top \lambda_i(t), \label{eq:neccesary-lam}
\end{align}
with the terminal condition 
\begin{equation}\label{eq:Lambda-compact}
    \lambda_i(t_f)=\frac{1}{\lvert\mathcal{N}_i\rvert}\hat{L}_iy(t_f-\tau).
\end{equation}

The solution of (\ref{eq:neccesary-lam}) is uniquely determined by  
\begin{equation}\label{eq:LTVsoln}
\lambda_i(t)=\mathrm{e}^{(t_f-t)\Lambda^\top}\lambda_i(t_f).
\end{equation} 
Substituting this solution into (\ref{eq:neccesary-u}) yields
\begin{equation}\label{eq:Nash00}
   u_i(t)=-\frac{1}{r_i}\hat{B}_i^\top\mathrm{e}^{(t_f-t)\Lambda^\top}\lambda_i(t_f). 
\end{equation}

Substituting (\ref{eq:neccesary-u}) into the opinion dynamics in (\ref{eq:nonstub-opt}) and then using (\ref{eq:Nash00}), I get
\begin{align}
    \label{eq:dynamics-lambda}
    \dot{y}(t)&=\Lambda y(t)-\sum_{i=1}^nS_i\lambda_i(t) \nonumber\\
    &=\Lambda y(t)-\sum_{i=1}^n S_i\mathrm{e}^{(t_f-t)\Lambda^\top}\lambda_i(t_f).
\end{align}
The solution of (\ref{eq:dynamics-lambda}) at $t$ is given by
\begin{equation}\label{eq:dynamics-sol0}
    y(t)=\mathrm{e}^{t\Lambda}y(0)-\sum_{i=1}^n\Psi_i(t)\lambda_i(t_f), 
\end{equation}
where $\Psi_i(t)$ is defined in (\ref{eq:matrix-Psi-i}).

Let $\lambda(t_f)=[\lambda_1^\top(t_f),\cdots,\lambda_n^\top(t_f)]^\top$. Using notation (\ref{eq:matrix-Psi}), solution (\ref{eq:dynamics-sol0}) becomes
\begin{equation}\label{eq:dynamics-sol}
    y(t)=\mathrm{e}^{t\Lambda}y(0)-\Psi(t)\lambda(t_f). 
\end{equation}

Stacking (\ref{eq:Lambda-compact}) for $i = 1, \cdots, n$ yields
\begin{equation}\label{eq:lamtf-compact}
    \lambda(t_f)=\Delta y(t_f-\tau)
\end{equation}
with $\Delta$ defined in (\ref{eq:matrix-Delta}). Substituting $\lambda(t_f)$ from (\ref{eq:lamtf-compact}) then into (\ref{eq:dynamics-sol}) at $t_f-\tau$ yields 
 \begin{align*}
y(t_f-\tau)=\mathrm{e}^{(t_f-\tau)\Lambda}y(0)-\Psi(t_f-\tau)\Delta y(t_f-\tau),
\end{align*}
which can be rewritten as
 \begin{align}\label{eq:state-equation-simplify}
\big(I+\Psi(t_f-\tau)\Delta\big)y(t_f-\tau)=\mathrm{e}^{(t_f-\tau)\Lambda}y(0).
\end{align}

Using the notation $H(t_f-\tau)$ in (\ref{eq:matrix-H}), the compact form of (\ref{eq:state-equation-simplify}) is
 \begin{align}\label{eq:state-y(tf)}
y(t_f-\tau)=H^{-1}(t_f-\tau)\mathrm{e}^{(t_f-\tau)\Lambda}y(0).
\end{align}

If the game has a unique open-loop Nash equilibrium, then (\ref{eq:state-equation-simplify}) is satisfied for any arbitrary $y(0)$ and $y(t_f-\tau)$. Equivalently, if matrix $H^{-1}(t_f-\tau)$ has an inverse for any arbitrary $y(t_f-\tau)$, the unique equilibrium actions exist and could be calculated for all $t\in[0,t_f-\tau]$. By substituting (\ref{eq:state-y(tf)}) into (\ref{eq:Lambda-compact}) and re-substituting (\ref{eq:Lambda-compact}) in (\ref{eq:neccesary-u}), I obtain (\ref{eq:Nash-nonstub}). 

Substituting $\lambda(t_f)$ from (\ref{eq:lamtf-compact}) into (\ref{eq:dynamics-sol}) and then re-submitting $y(t_f-\tau)$ from (\ref{eq:state-y(tf)}), I have 
\begin{align}\label{eq:final-tr}
y(t)&=\mathrm{e}^{t\Lambda}y(0) -\Psi(t)\lambda(t_f) \nonumber \\
&=\mathrm{e}^{t\Lambda}y(0)-\Psi(t)\Delta y(t_f-\tau) \nonumber \\
&=\mathrm{e}^{t\Lambda}y(0)-\Psi(t)\Delta H^{-1}(t_f-\tau)\mathrm{e}^{(t_f-\tau)\Lambda}y(0). 
\end{align}
Substituting $y(t)=\mathrm{e}^{-\tau\Lambda}x(t+\tau)$ and $y(0)=\mathrm{e}^{-\tau\Lambda}x(\tau)$ into (\ref{eq:final-tr}) and after simplifications, the state trajectory for opinions in (\ref{eq:Nash-tr-nonstub}) appears. This concludes the proof.
\end{proof}

\begin{remark}\label{remark:dist} For a delay-free optimization problem (\ref{eq:nonstub-opt}), i.e., $\tau=0$, matrices $\hat{L}_i$ and $\hat{B}_i$ in the equilibrium actions (\ref{eq:Nash-nonstub}) reduce to $L_i$ and $B_i$, respectively. From the definition of the Laplacian matrix $L_i$ in (\ref{eq:Laplacian}), it can be easily figured out that all nonzero elements of $L_i$ are only in the $i$th row and column. Moreover, the nonzero elements of the $i$th row and column are at the indices $j\in\mathcal{N}_i$. 
Based on this structure, it can be deduced that the matrix product $L_iH^{-1}(t_f)\mathrm{e}^{t_f\Lambda}x_0$ only requires the initial $x_0$ entries for $\forall j\in\mathcal{N}_i$ in $x_0$. Thus, I can draw the conclusion that the equilibrium actions (\ref{eq:Nash-nonstub}) are distributed in the sense that each agent only uses local information of her own and her graph neighbors without using any global information of the communication graph.
\end{remark}

\subsection{Stubborn social network}

The stubborn optimization in (\ref{eq:dynamics0}) and (\ref{eq:quadratic-cost0-stub}) in compact form is
\begin{align}\label{eq:stub-opt}
    &\quad \min_{u_i}~  \Tilde{J}_i(x_0,u_i(t-\tau))=\nonumber\\&\big(x(t_f)- x_0\big)^\top W_i \big(x(t_f)-x_0\big)+\frac{1-\omega_i}{\lvert\mathcal{N}_i\rvert}x^\top(t_f)L_ix(t_f)+\int_{0}^{t_f} r_iu_i^2(t-\tau)~\mathrm{dt},\quad i=1,\ldots,n,\\
    &{s.t.}\notag\\
     &\quad \dot{x}(t)=\Lambda x(t)+\sum_{i=1}^n B_iu_i(t-\tau), \quad x(0)=x_0. \nonumber
\end{align}
  
\begin{theorem}\label{theorem:stub}
The unique Nash equilibrium actions and the associated opinion trajectory with these equilibrium actions for opinion formation in a stubborn social network as the noncooperative differential game in (\ref{eq:stub-opt}) are given by (\ref{eq:Nash-stub}) and (\ref{eq:Nash-tr-stub}), respectively, shown at the bottom of the next page.
 \begin{align}
         u&_i(t)=-\frac{1}{r_i}\hat{B}_i^\top\mathrm{e}^{(t_f-t)\Lambda^\top}\Big[\big(\hat{W}_i+\frac{1-\omega_i}{\lvert\mathcal{N}_i\rvert}\hat{L}_i\big)\hat{H}^{-1}(t_f-\tau)\big(\mathrm{e}^{(t_f-2\tau)\Lambda}x(\tau)+\Psi(t_f-\tau)\Omega x_0\big)-\Tilde{W}_ix_0\Big], \label{eq:Nash-stub}\\
         x&(t+\tau)=\mathrm{e}^{t\Lambda}x(\tau)-\mathrm{e}^{\tau\Lambda}\Psi(t)\Big[\big(\hat{\Delta}+\Gamma\big)\hat{H}^{-1}(t_f-\tau)\Big(\mathrm{e}^{(t_f-2\tau)\Lambda}x(\tau)+\Psi(t_f-\tau)\Omega x_0\Big)-\Omega x_0\Big],  \label{eq:Nash-tr-stub}
 \end{align}

 where
 \begin{align}
     &\Tilde{H}(t_f-\tau)=I+\Psi(t_f-\tau)\big(\hat{\Delta}+\Gamma\big), \label{eq:matrix-H-h}\\
  &\Omega=[\Tilde{W}_1,\cdots,\Tilde{W}_n]^\top, \label{eq:matrix-O}\\
    &\Gamma=[\hat{W}_1,\cdots,\hat{W}_n]^\top, \label{eq:matrix-F}\\
  &\Tilde{\Delta}=[\frac{1-\omega_1}{\lvert\mathcal{N}_1\rvert}\hat{L}_1,\cdots,\frac{1-\omega_n}{\lvert\mathcal{N}_n\rvert}\hat{L}_n]^\top,\label{eq:matrix-P}\\
  &\hat{W}_i=\mathrm{e}^{\tau\Lambda^\top}W_i\mathrm{e}^{\tau\Lambda},\quad\Tilde{W}_i= 2\mathrm{e}^{\tau\Lambda^\top}W_i. \label{eq:notation2}
 \end{align}
\end{theorem}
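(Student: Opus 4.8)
The plan is to reproduce the argument of the proof of Theorem~\ref{theorem:nonstub} almost verbatim, since the state dynamics in (\ref{eq:stub-opt}) coincide with those in (\ref{eq:nonstub-opt}) and only the cost differs, by the stubbornness term $(x(t_f)-x_0)^\top W_i(x(t_f)-x_0)$. First I would remove the input delay exactly as before: split the control integral at $\tau$, drop the constant piece $\int_0^\tau r_iu_i^2$, and apply the integral transformation (\ref{eq:transform-dynamics}). Because this step depends only on the dynamics, it produces the same delay-free system $\dot{y}(t)=\Lambda y(t)+\sum_{i=1}^n\hat{B}_iu_i(t)$ with $y(0)=\mathrm{e}^{-\tau\Lambda}x(\tau)$, together with the identities $x(t_f)=\mathrm{e}^{\tau\Lambda}y(t_f-\tau)$ and, later, $y(t)=\mathrm{e}^{-\tau\Lambda}x(t+\tau)$.

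The one genuinely new computation is rewriting the terminal cost in the $y$-coordinates. Substituting $x(t_f)=\mathrm{e}^{\tau\Lambda}y(t_f-\tau)$ and expanding gives
\[ (x(t_f)-x_0)^\top W_i(x(t_f)-x_0)=y^\top(t_f-\tau)\hat{W}_iy(t_f-\tau)-y^\top(t_f-\tau)\tilde{W}_ix_0+x_0^\top W_ix_0, \]
where $\hat{W}_i=\mathrm{e}^{\tau\Lambda^\top}W_i\mathrm{e}^{\tau\Lambda}$ and $\tilde{W}_i=2\mathrm{e}^{\tau\Lambda^\top}W_i$ as in (\ref{eq:notation2}); the last summand is control-independent and is discarded. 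Adding the transformed disagreement term $\frac{1-\omega_i}{\lvert\mathcal{N}_i\rvert}y^\top(t_f-\tau)\hat{L}_iy(t_f-\tau)$ and forming the Hamiltonian as in (\ref{eq:Hamiltonian}), the stationarity condition (\ref{eq:neccesary-u}) and the co-state equation (\ref{eq:neccesary-lam}) are unchanged, but the transversality condition picks up an affine term and becomes $\lambda_i(t_f)=\big(\hat{W}_i+\frac{1-\omega_i}{\lvert\mathcal{N}_i\rvert}\hat{L}_i\big)y(t_f-\tau)-\tilde{W}_ix_0$, following the same differentiation convention fixed for (\ref{eq:Lambda-compact}). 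This inhomogeneous transversality condition is exactly where the stubborn problem departs from the non-stubborn one.

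From here the remaining algebra parallels Theorem~\ref{theorem:nonstub}, carrying the extra $x_0$ term along. Solving the co-state gives $\lambda_i(t)=\mathrm{e}^{(t_f-t)\Lambda^\top}\lambda_i(t_f)$, and the closed-form state solution is again $y(t)=\mathrm{e}^{t\Lambda}y(0)-\Psi(t)\lambda(t_f)$. Stacking the $n$ transversality conditions yields $\lambda(t_f)=(\tilde{\Delta}+\Gamma)y(t_f-\tau)-\Omega x_0$ with $\tilde{\Delta}$, $\Gamma$, $\Omega$ from (\ref{eq:matrix-P}), (\ref{eq:matrix-F}), (\ref{eq:matrix-O}). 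Evaluating the state solution at $t=t_f-\tau$ and collecting the $y(t_f-\tau)$ terms gives $\big(I+\Psi(t_f-\tau)(\tilde{\Delta}+\Gamma)\big)y(t_f-\tau)=\mathrm{e}^{(t_f-\tau)\Lambda}y(0)+\Psi(t_f-\tau)\Omega x_0$, i.e. $y(t_f-\tau)=\tilde{H}^{-1}(t_f-\tau)\big(\mathrm{e}^{(t_f-2\tau)\Lambda}x(\tau)+\Psi(t_f-\tau)\Omega x_0\big)$ once $y(0)=\mathrm{e}^{-\tau\Lambda}x(\tau)$ and $\mathrm{e}^{(t_f-\tau)\Lambda}\mathrm{e}^{-\tau\Lambda}=\mathrm{e}^{(t_f-2\tau)\Lambda}$ are inserted. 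Re-substituting this into the transversality condition and then into (\ref{eq:neccesary-u}) produces the equilibrium actions (\ref{eq:Nash-stub}); back-substituting $\lambda(t_f)$ and $y(t_f-\tau)$ into $y(t)=\mathrm{e}^{t\Lambda}y(0)-\Psi(t)\lambda(t_f)$ and converting to $x$ via $y(t)=\mathrm{e}^{-\tau\Lambda}x(t+\tau)$ produces the trajectory (\ref{eq:Nash-tr-stub}). As in Theorem~\ref{theorem:nonstub}, uniqueness follows from the invertibility of $\tilde{H}(t_f-\tau)$, which guarantees solvability of the stacked system for arbitrary $y(0)$ and $x_0$.

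I expect the main difficulty to be bookkeeping rather than conceptual. The transversality condition is now affine in the prejudice vector $x_0$, so the inhomogeneous term $-\Omega x_0$ must be propagated consistently through the stacking, the solve for $y(t_f-\tau)$, and both final substitutions; a single dropped factor there would break the match with (\ref{eq:Nash-stub})--(\ref{eq:Nash-tr-stub}). The most error-prone point is keeping the two roles of the initial data distinct: $x(\tau)$ enters through the delay-free initial condition $y(0)=\mathrm{e}^{-\tau\Lambda}x(\tau)$, whereas $x_0$ enters a second, independent time through the explicit prejudice term in the cost, and in general $x(\tau)\neq x_0$ because the state undergoes the uncontrolled transient $\dot{x}=\Lambda x$ on $[0,\tau[$. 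A secondary check is that the quadratic and the cross (linear-in-$x_0$) contributions to the transversality condition are differentiated with the same convention used in (\ref{eq:Lambda-compact}), so that the coefficients $\hat{W}_i+\frac{1-\omega_i}{\lvert\mathcal{N}_i\rvert}\hat{L}_i$ and $\tilde{W}_i$ appear with exactly the normalizations recorded in (\ref{eq:notation2}).
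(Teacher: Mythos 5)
Your proposal is correct and follows essentially the same route as the paper's own proof: the same delay-removing transformation and decomposition of the cost, the same expansion of the stubbornness term with the constant $x_0^\top W_i x_0$ discarded, the same inhomogeneous transversality condition $\lambda_i(t_f)=\big(\hat{W}_i+\tfrac{1-\omega_i}{\lvert\mathcal{N}_i\rvert}\hat{L}_i\big)y(t_f-\tau)-\Tilde{W}_ix_0$, and the same stacking, inversion of $\Tilde{H}(t_f-\tau)$, and back-substitution into the control and state formulas. Your bookkeeping of the $-\Omega x_0$ term and of the distinction between $x(\tau)$ and $x_0$ matches the paper's derivation exactly, so there is no gap relative to the paper.
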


\begin{proof}
Substituting the transformation $x(t_f)=\mathrm{e}^{\tau\Lambda}y(t_f-\tau)$ into the cost function in (\ref{eq:stub-opt}), I have 
\begin{align}\label{eq:stub-cost-new0}
    \Tilde{J}_i(x_0,u_i(t))&=\big(\mathrm{e}^{\tau\Lambda}y(t_f-\tau)- x_0\big)^\top W_i \big(\mathrm{e}^{\tau\Lambda}y(t_f-\tau)-x_0\big)\nonumber\\&+\frac{1-\omega_i}{\lvert\mathcal{N}_i\rvert}y^\top(t_f-\tau)\hat{L}_iy(t_f-\tau)+\int_{0}^{t_f-\tau} r_iu_i^2(t)~\mathrm{dt}\nonumber\\
    &=y^\top(t_f-\tau)\mathrm{e}^{\tau\Lambda^\top}W_i\mathrm{e}^{\tau\Lambda}y(t_f-\tau)- 2y^\top(t_f-\tau)\mathrm{e}^{\tau\Lambda^\top} W_i x_0+x_0^\top W_i x_0 \nonumber\\&+\frac{1-\omega_i}{\lvert\mathcal{N}_i\rvert}y^\top(t_f-\tau)\hat{L}_iy(t_f-\tau)+\int_{0}^{t_f-\tau} r_iu_i^2(t)~\mathrm{dt}
\end{align}
Here, $x_0^\top W_i x_0$ is a constant. Then, (\ref{eq:stub-cost-new0}) is equivalent to
\begin{align}\label{eq:stub-cost-new1}
    \Tilde{J}_i(x_0,u_i(t))&=y^\top(t_f-\tau)\hat{W}_iy(t_f-\tau)-y^\top(t_f-\tau)\Tilde{W}x_0 \nonumber\\&+\frac{1-\omega_i}{\lvert\mathcal{N}_i\rvert}y^\top(t_f-\tau)\hat{L}_iy(t_f-\tau)+\int_{0}^{t_f-\tau} r_iu_i^2(t)~\mathrm{dt},
\end{align}
where $\hat{W}_i$ and $\Tilde{W}_i$ are defined in (\ref{eq:notation2}).
Therefore, the delay-free optimization becomes
\begin{align}\label{eq:stub-opt-new}
    &\min_{u_i}~\Tilde{J}_i(x_0,u_i(t))= y^\top(t_f-\tau)\hat{W}_iy(t_f-\tau)-y^\top(t_f-\tau)\Tilde{W}_ix_0\nonumber\\&\quad\quad\quad\quad\quad\quad\quad\quad+\frac{1-\omega_i}{\lvert\mathcal{N}_i\rvert}y^\top(t_f-\tau)\hat{L}_iy(t_f-\tau)+\int_{0}^{t_f-\tau} r_iu_i^2(t)~\mathrm{dt},\quad i=1,\ldots,n,\\
    &{s.t.}\notag\\
    &\quad\dot{y}(t)=\Lambda y(t)+\sum_{i=1}^n \hat{B}_iu_i(t), \quad y(0)=\mathrm{e}^{-\tau\Lambda}x(\tau). \nonumber
\end{align}

Defining the Hamiltonian in (\ref{eq:Hamiltonian}) and applying the necessary conditions for optimality yield (\ref{eq:neccesary-u}) and (\ref{eq:neccesary-lam}) with the terminal condition 
\begin{equation}\label{eq:Lambda-compact-stub}
    \lambda_i(t_f)=(\hat{W}_i+\frac{1-\omega_i}{\lvert\mathcal{N}_i\rvert}\hat{L}_i)y(t_f-\tau)-\Tilde{W}_ix_0.
\end{equation}
Stacking (\ref{eq:Lambda-compact-stub}) for $i = 1, \cdots, n$ yields
\begin{equation}\label{eq:lamtf-compact-s}
    \lambda(t_f)=\big(\Tilde{\Delta}+\Gamma \big) y(t_f-\tau)-\Omega x_0
\end{equation}
with $\Omega$, $\Gamma$, and $\hat{\Delta}$ given in (\ref{eq:matrix-O}-\ref{eq:matrix-P}), respectively.

Substituting $\lambda(t_f)$ from (\ref{eq:lamtf-compact-s}) then into (\ref{eq:dynamics-sol}) at $t_f-\tau$ yields 
 \begin{align*}
y(t_f-\tau)&=\mathrm{e}^{(t_f-\tau)\Lambda}y(0)-\Psi(t_f-\tau)\big(\Tilde{\Delta}+\Gamma \big) y(t_f-\tau)+\Psi(t_f-\tau)\Omega x_0,
\end{align*}
which can be rewritten as
 \begin{align}\label{eq:state-equation-simplify-s}
\Big(I+\Psi(t_f-\tau)\big(&\Tilde{\Delta}+\Gamma \big)\Big)y(t_f-\tau)=\mathrm{e}^{(t_f-\tau)\Lambda}y(0)+\Psi(t_f-\tau)\Omega x_0.
\end{align}

Using the notation $\hat{H}(t_f-\tau)$ in (\ref{eq:matrix-H-h}), the compact form of (\ref{eq:state-equation-simplify-s}) is
 \begin{align}\label{eq:state-y(tf)-stub}
y(t_f-\tau)=\hat{H}^{-1}(t_f-\tau)\Big(\mathrm{e}^{(t_f-\tau)\Lambda}y(0)+\Psi(t_f-\tau)\Omega x_0 \Big).
\end{align}

The game has a unique open-loop Nash equilibrium if matrix $\hat{H}(t_f-\tau)$ is nonsingular. By substituting (\ref{eq:state-y(tf)-stub}) into (\ref{eq:Lambda-compact-stub}) and re-substituting (\ref{eq:Lambda-compact-stub}) into (\ref{eq:LTVsoln}) and then into (\ref{eq:neccesary-u}), I obtain (\ref{eq:Nash-stub}). Substituting $y(t_f-\tau)$ from (\ref{eq:state-y(tf)-stub}) into (\ref{eq:lamtf-compact-s}), then re-submitting (\ref{eq:lamtf-compact-s}) into (\ref{eq:dynamics-sol}), and using $y(t)=\mathrm{e}^{-\tau\Lambda}x(t+\tau)$ and $y(0)=\mathrm{e}^{-\tau\Lambda}x(\tau)$, I get (\ref{eq:Nash-tr-stub}). The proof is complete.
\end{proof}

\begin{remark} 
Equilibrium actions (\ref{eq:Nash-stub}) could be judged similarly to Remark~\ref{remark:dist}. 
\end{remark}

\begin{remark} 
If $\omega_i=0$ for all $i=1,\cdots,n$, then the matrices $W_i$, $\hat{W}_i$, $\Tilde{W}_i$ for all $i=1,\cdots,n$, $\Omega$, and $\Gamma$ will be zero matrices and therefore, the results of Theorem~\ref{theorem:stub}, i.e., (\ref{eq:Nash-stub}) and (\ref{eq:Nash-tr-stub}) will reduces to the results of Theorem~\ref{theorem:nonstub}, i.e., (\ref{eq:Nash-nonstub}) and (\ref{eq:Nash-tr-nonstub}), respectively. 
\end{remark}

\section{Receding Horizon Implementation of Game Strategies}\label{receding}

The differential games proposed in Section~\ref{formulation} are offline optimizations of opinion formation in social networks with self-interested individuals. At the beginning of the game, the individuals determine their strategies and act accordingly over the entire time interval. The open-loop Nash equilibrium strategies in~(\ref{eq:Nash-nonstub}) and~(\ref{eq:Nash-stub}) depend on information only from the beginning of the game. However, players in the game of opinion formation naturally have a tendency to base their decisions on the current information in the network, so feedback strategies rather than open-loop strategies are in demand. Feedback information differential games are problematic due to the solvability issues of the coupled partial differential equations they convert to~\cite{EVANS45010271}.

In this work, I use the receding horizon control scheme to practice feedback strategies. In this control scheme, also known as model predictive control (MPC)~\cite{kwon2005receding} at the beginning, the optimization problem is solved to determine the open-loop Nash equilibrium strategies over the fixed time horizon, and then the first input is applied. At the next time step, the time horizon of the previously solved optimization is shifted one step forward, taking into account the current information of the network, a new optimization forms, which is then solved, and the first input is applied. This process is repeated at every future time step.

Before proposing a receding horizon implementation of open-loop Nash strategies in~(\ref{eq:Nash-nonstub}) and~(\ref{eq:Nash-stub}), I define $P_i(t_f,t)=\frac{1}{\lvert\mathcal{N}_i\rvert}\mathrm{e}^{(t_f-t)\Lambda^\top}\hat{L}_iH^{-1}(t_f-\tau)$ and rewrite~(\ref{eq:Nash-nonstub}) as
 \begin{align}
         &u_i(t)=-\frac{1}{r_i}\hat{B}_i^\top P_i(t_f,t)\mathrm{e}^{(t_f-2\tau)\Lambda}x(\tau). \label{eq:eq:Nash-nonstub-15}
 \end{align}
To rewrite the opinion trajectory $x(t+\tau)$ associated with the equilibrium actions in (\ref{eq:eq:Nash-nonstub-15}) in terms of $P_i$, I revisit (\ref{eq:dynamics-sol0}), shown at the bottom of the next page,
 \begin{align}
   y(t)&=\mathrm{e}^{t\Lambda}y(0)-\sum_{i=1}^n\Psi_i(t,0)\lambda_i(t_f)=\mathrm{e}^{t\Lambda}y(0)-\sum_{i=1}^n\int_0^{t}\mathrm{e}^{(t-s)\Lambda}S_i\mathrm{e}^{(t-s)\Lambda^\top}\lambda_i(t_f)~\mathrm{ds}\nonumber\\
    &=\mathrm{e}^{t\Lambda}y(0)-\sum_{i=1}^n\int_0^{t}\mathrm{e}^{(t-s)\Lambda}S_i\mathrm{e}^{(t-s)\Lambda^\top}\frac{1}{\lvert\mathcal{N}_i\rvert}\hat{L}_iy(t_f-\tau)~\mathrm{ds}\nonumber\\
    &=\mathrm{e}^{t\Lambda}y(0)-\sum_{i=1}^n\int_0^{t}\mathrm{e}^{(t-s)\Lambda}S_i\mathrm{e}^{(t-s)\Lambda^\top}\frac{1}{\lvert\mathcal{N}_i\rvert}\hat{L}_iH^{-1}(t_f-\tau)\mathrm{e}^{(t_f-\tau)\Lambda}y(0)~\mathrm{ds}\nonumber\\
    &=\Big(\mathrm{e}^{t\Lambda}-\sum_{i=1}^n\int_0^{t}\mathrm{e}^{(t-s)\Lambda}S_iP_i(t,s)~\mathrm{ds}~\mathrm{e}^{(t_f-\tau)\Lambda}\Big)y(0), \label{eq:dynamics-sol00}
 \end{align}
where $P_i(t,s)=\frac{1}{\lvert\mathcal{N}_i\rvert}\mathrm{e}^{(t-s)\Lambda^\top}\hat{L}_iH^{-1}(t_f-\tau)$.
Substituting $y(t)=\mathrm{e}^{-\tau\Lambda}x(t+\tau)$ and $y(0)=\mathrm{e}^{-\tau\Lambda}x(\tau)$ into (\ref{eq:dynamics-sol00}), I have
\begin{align}
   x(t+\tau)= \Big(\mathrm{e}^{t\Lambda}-\mathrm{e}^{\tau\Lambda}\sum_{i=1}^n\int_0^{t}\mathrm{e}^{(t-s)\Lambda}S_iP_i(t,s)~\mathrm{ds}~\mathrm{e}^{(t_f-2\tau)\Lambda}\Big)x(\tau). \label{eq:dynamics-rc0}
\end{align}

At each time instant $t$, assuming that the delay-free vector $\bar{x}_0=x(t)$ or delayed vector $\bar{x}_0(\tau)=x(t+\tau)$ is available, the receding horizon cost function for non-stubborn optimization is as follows
\begin{align}\label{eq:nonstub-opt-receding}
    \min_{u_i}~  J_i(x(t)&,u_i(t-\tau))= \frac{1}{\lvert\mathcal{N}_i\rvert}x^\top(t+t_f)L_ix(t+t_f)+\int_{t}^{t+t_f} r_iu_i^2(t-\tau)~\mathrm{dt},\quad i=1,\ldots,n. 
\end{align}
Following~\cite{Gu4392487}, the receding horizon Nash equilibrium actions $\bar{u}_i(t)$ and their associated opinion trajectory $\bar{x}(t+\tau)$ are defined as
\begin{align}
         &\bar{u}_i(t)=-\frac{1}{r_i}\hat{B}_i^\top P_i(t_f,0)\mathrm{e}^{(t_f-2\tau)\Lambda}\bar{x}_0(\tau), \label{eq:Nash-nonstub-rh}\\
         &\bar{x}(t+\tau)= \Big(\mathrm{e}^{t\Lambda}-\mathrm{e}^{\tau\Lambda}\sum_{i=1}^n\int_0^{t}\mathrm{e}^{(t-s)\Lambda}S_iP_i(t,0)~\mathrm{ds}~\mathrm{e}^{(t_f-2\tau)\Lambda}\Big)\bar{x}_0(\tau). \label{eq:Nash-nonstub-tr-rh}
 \end{align}
 
In the following subsections, I present three implementations of the receding horizon scheme. 

\subsection{Fixed Social Graph}

First, I assume that each individual in the social network takes only the opinions of her immediate neighbors on a fixed social graph into account when determining the set of individuals whose opinions differ from her own not bigger than her confidence bound. So that the social graph stays connected, I assume that each agent chooses a confidence bound so that at least one neighbor's opinion doesn't differ by more than her confidence bound. The proposed receding horizon implementation of Nash equilibrium strategies on a fixed social graph is presented in Algorithm~\ref{algofixed}. Instead of applying only the first input at time $t$, the receding horizon implementation is applied over the interval $[t,t+\sigma[$ (where $\tau<\sigma<t_f$) to accommodate the delay. 

\begin{algorithm}
\caption{Receding horizon implementation of Nash equilibrium strategies on a fixed social graph over any interval $[t,t+\sigma[$ ($\tau<\sigma<t_f$)} \label{algofixed}
\begin{algorithmic}[1]
\For{$i=1,\cdots,n$}
   \For{$\forall j\in\mathcal{N}_i$}
       \If{$\lvert x_i(t)-x_j(t)\rvert\leq\epsilon_i$}
             \State $\mathcal{N}_i^{tmp}\Leftarrow {\text{add}~} j$
       \EndIf
   \EndFor
\State ${\text{form matrices}~}D_i,A_i,L_i {\text{~using}~}\mathcal{N}_i^{tmp}$   
\EndFor
\State ${\text{form matrix}~}\Lambda$ \label{algoline}
\For{$i=1,\cdots,n$}
    \State ${\text{form matrices}~}\hat{L}_i,\hat{B}_i,S_i,\Phi_i(t_f-\tau)$
\EndFor
 \State ${\text{form matrices}~}\Phi(t_f-\tau),\Delta,H(t_f-\tau)$
\State ${\text{for time interval}~[t,t+\tau[:~}$
\For{$i=1,\cdots,n$}
     \State $\bar{u}_i(s)\Leftarrow 0, s\in[t,t+\tau[$
\EndFor
\State $x(s)\Leftarrow \mathrm{e}^{(t+s)\Lambda}\bar{x}_0, s\in[t,t+\tau[$

\State $\bar{x}_0(\tau) \Leftarrow \mathrm{e}^{\tau\Lambda}\bar{x}_0$
\State ${\text{for time interval}~[t+\tau,t+\sigma[:~}$
\For{$i=1,\cdots,n$}
     \State $\bar{u}_i(s)\Leftarrow {\text{use the right side in}~}(\ref{eq:Nash-nonstub-rh}), s\in[t+\tau,t+\sigma[$
\EndFor
\State $\bar{x}(s+\tau)\Leftarrow {\text{use the right side in}~}(\ref{eq:Nash-nonstub-tr-rh}), s\in[t+\tau,t+\sigma[$
\end{algorithmic}
\end{algorithm}

\subsection{Complete Social Graph}
Here, I assume that the social graph is complete and that each agent $i$ determines the set of her neighbors by examining the opinion values of all individuals in the network. The proposed receding horizon implementation of Nash equilibrium strategies on a complete social graph is presented in Algorithm~\ref{algocomplete}. The complete social graph scenario could also be considered equivalent to a time-varying social graph in the sense that, at each receding horizon implementation, the social graph's topology varies to capture the time-varying interaction network.

\begin{algorithm}
\caption{Receding horizon implementation of Nash equilibrium strategies on a complete social graph over any interval $[t,t+\sigma[$ ($\tau<\sigma<t_f$)} \label{algocomplete}
\begin{algorithmic}[1]
\For{$i=1,\cdots,n$}
   \For{$j=1,\cdots,n$}
       \If{$i\neq j$}
           \If{$\lvert x_i(t)-x_j(t)\rvert\leq\epsilon_i$}
                \State $\mathcal{N}_i\Leftarrow {\text{add}~} j$
            \EndIf
       \EndIf
   \EndFor
\State ${\text{form matrices}~}D_i,A_i,L_i$   
\EndFor
\State ${\text{go to line~\ref{algoline} in Algorithm~\ref{algofixed}}}$
\end{algorithmic}
\end{algorithm}

\subsection{Second Neighborhood in Social Graph}
In graph theory, the second neighborhood refers to all the nodes that are exactly two edges away from a given node in a graph~\cite{Bouya2020}. In the context of a social graph where an edge pairs two friends, the second neighborhood concept is equivalent to friends-of-friends. The second neighborhood concept is a useful tool from graph theory that helps model the structure and connectivity of social networks beyond direct connections between individuals. In the third receding horizon implementation in Algorithm~\ref{algosecond}, each agent, besides her immediate graph neighbors, also takes the agents in her second neighborhood into account.

\begin{algorithm}
\caption{Receding horizon implementation of Nash equilibrium strategies based on the second neighborhood over any interval $[t,t+\sigma[$ ($\tau<\sigma<t_f$)} \label{algosecond}
\begin{algorithmic}[1]
\For{$i=1,\cdots,n$}
   \For{$\forall j\in\mathcal{N}_i$}
              \If{$\lvert x_i(t)-x_j(t)\rvert\leq\epsilon_i$}
                  \State $\mathcal{N}_i^{tmp}\Leftarrow {\text{add}~} j$
           \EndIf
      \For{$\forall k\in\mathcal{N}_j$}
           \If{$\lvert x_i(t)-x_k(t)\rvert\leq\epsilon_i$}
                  \State $\mathcal{N}_i^{tmp}\Leftarrow {\text{add}~} k$
           \EndIf
      \EndFor
   \EndFor
\State ${\text{form matrices}~}D_i,A_i,L_i {~\text{using}~}\mathcal{N}_i^{tmp}$ 
\EndFor
\State ${\text{go to line~\ref{algoline} in Algorithm~\ref{algofixed}}}$
\end{algorithmic}
\end{algorithm}

In all receding horizon implementations of game strategies in Algorithm~\ref{algofixed}-~\ref{algosecond}, I assume that at each execution time, each agent $i$ adopts a confidence bound value $\epsilon_i$ that guarantees a non-empty set $\mathcal{N}_i$, so the connectivity of the social graph is always preserved. Note that the receding horizon implementations above accommodate social networks with a heterogeneous population of users with different confidence bounds. I should also mention that the receding horizon implementation for a stubborn network is a more challenging task that is left for future work.

\section{Simulations}\label{simulation}

In this section, I apply the theoretical results to the well-known social network of Zachary's Karate Club~\cite{Zachary}. With its underlying social graph given in Fig.~\ref{fig:karate}, Zachary's Karate Club network, created by Wayne Zachary in 1977, is a social network between the members of a karate club at a US university. This network has 34 nodes and 78 edges, where the nodes and edges represent the club members and their mutual friendships. 

\begin{figure}
\centering
       \includegraphics[width=0.5\textwidth]{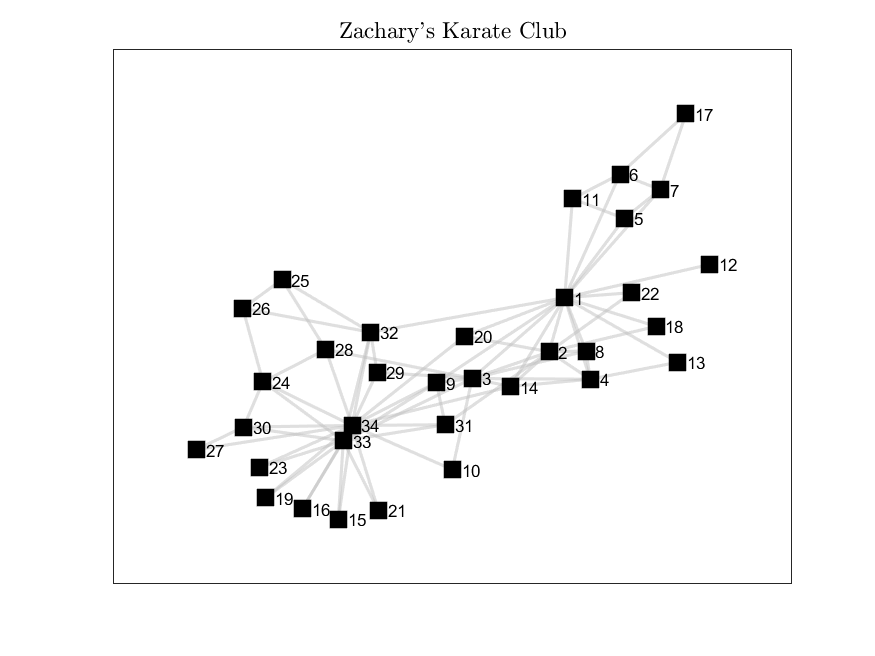}
\caption{Zachary's Karate Club network. }
\label{fig:karate}
\end{figure} 

The continuous-time HK model in (\ref{eq:dynamics0}) or its compact equation in (\ref{eq:nonstub-opt}) can be used to illustrate the formation of a particular opinion over time. The time interval is chosen $t_f=10$ and the users' opinions are initially uniformly spaced on the interval $[-1,1)$, i.e., $x_0=\{-1,-0.94,\cdots,0.92,0.98\}$. The evolution of club members' opinions prior to any optimization (i.e., $u_i(t)=0$) under the receding horizon scheme in Algorithm~\ref{algofixed} is shown in Fig.~\ref{fig:HK}. As it is seen, the club members' opinions reach a consensus about the average opinion in the network. In the first sub-figure from the left, all individuals have adopted the confidence bound value $\epsilon_i=1.2$. Running Algorithm~\ref{algofixed} multiple times showed that for a lower value than $\epsilon_i=1.2$ on a fixed social graph using the initial opinion vector above, the set of neighbors $\mathcal{N}_i$ for some individuals becomes empty and their opinions cannot be calculated. The next two sub-figures depict the opinion trajectories for the situation when the confidence bound value is increased to $\epsilon_i=1.5$ and $\epsilon_i=2$, respectively, where the effect on opinions is insignificant. 

\begin{figure*}
\centering
\begin{minipage}[b]{0.9\textwidth}
       \includegraphics[width=0.32\textwidth]{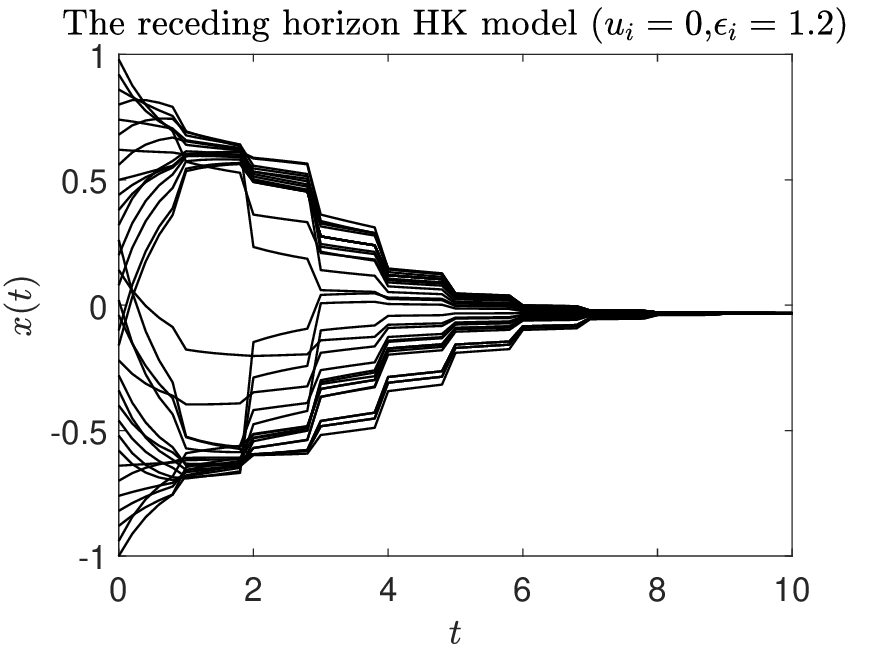}
       \includegraphics[width=0.32\textwidth]{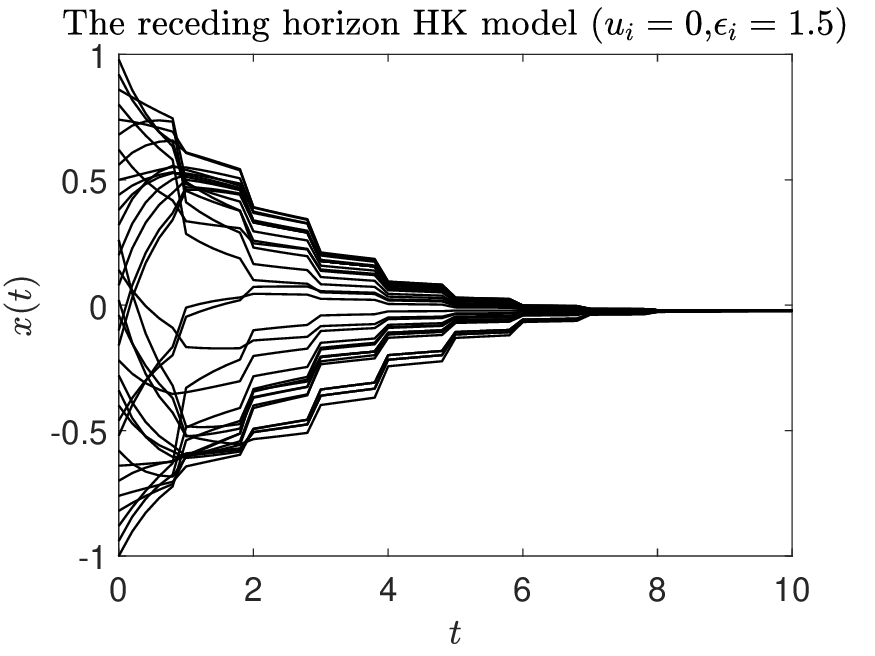}
       \includegraphics[width=0.32\textwidth]{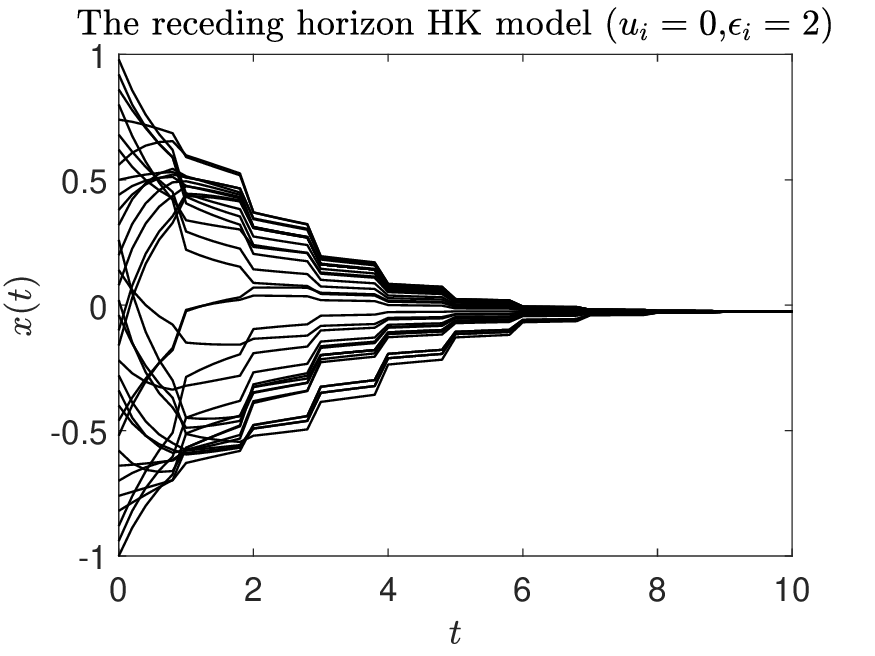}
\end{minipage}
\caption{Opinion trajectories associated with the HK model for Zachary's network prior to optimization.}
\label{fig:HK}
\end{figure*} 

For self-interested members of Zachary's Karate Club social network, whose underlying opinion dynamics is the HK model, the open-loop game strategies in Theorem~\ref{theorem:nonstub} and Theorem~\ref{theorem:stub} can be used to illustrate their opinion formation. Let $b_i=1$. For the stubborn and totally stubborn ($\omega_i=1~\forall i\in\mathcal{V}$) Zachary's network, the opinion trajectories as a result of optimization (\ref{eq:stub-opt}) are shown in Fig.~\ref{fig:stub}. As captured in (\ref{eq:quadratic-cost0-stub}), the prejudices of the stubborn members influence the formation of their final opinions. It can be observed that there is disagreement among the final opinions in the (totally) stubborn network. A large enough $r_i~\forall i\in\mathcal{V}$ can force a consensus, as the individuals in the stubborn network will try to minimize their influence effort rather than their disagreement with others and their prejudices. The rationale behind this is that with a large enough $r_i$, the corresponding optimization boils down to the minimization of the influence effort term.

\begin{figure*}
\centering
\begin{minipage}[b]{0.9\textwidth}
       \includegraphics[width=0.32\textwidth]{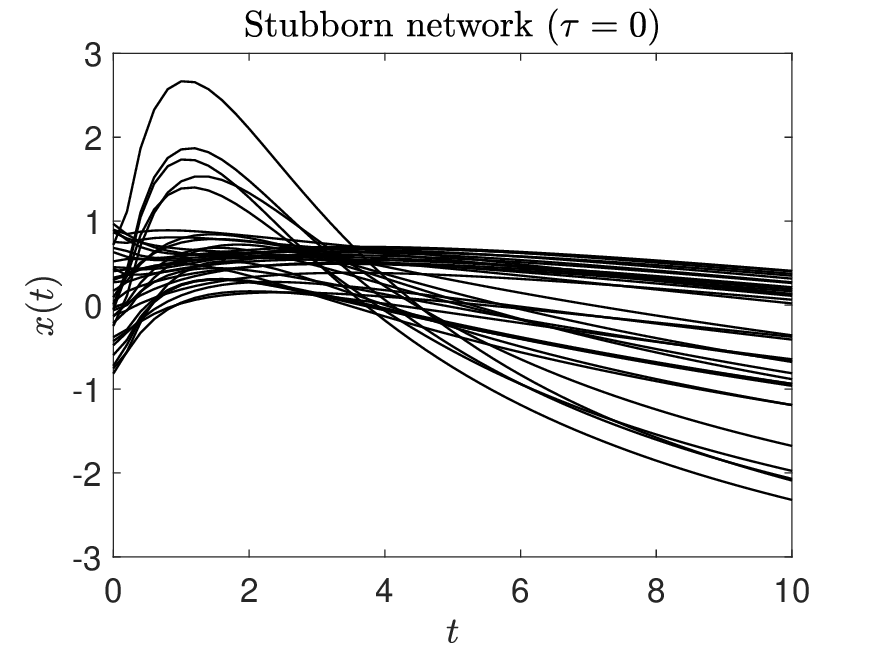}
       \includegraphics[width=0.32\textwidth]{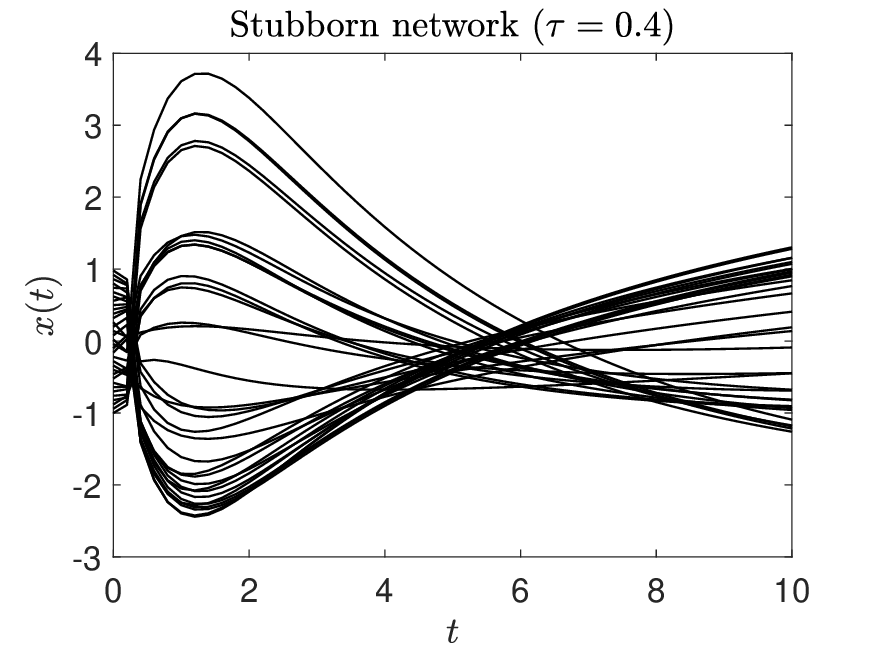}
       \includegraphics[width=0.32\textwidth]{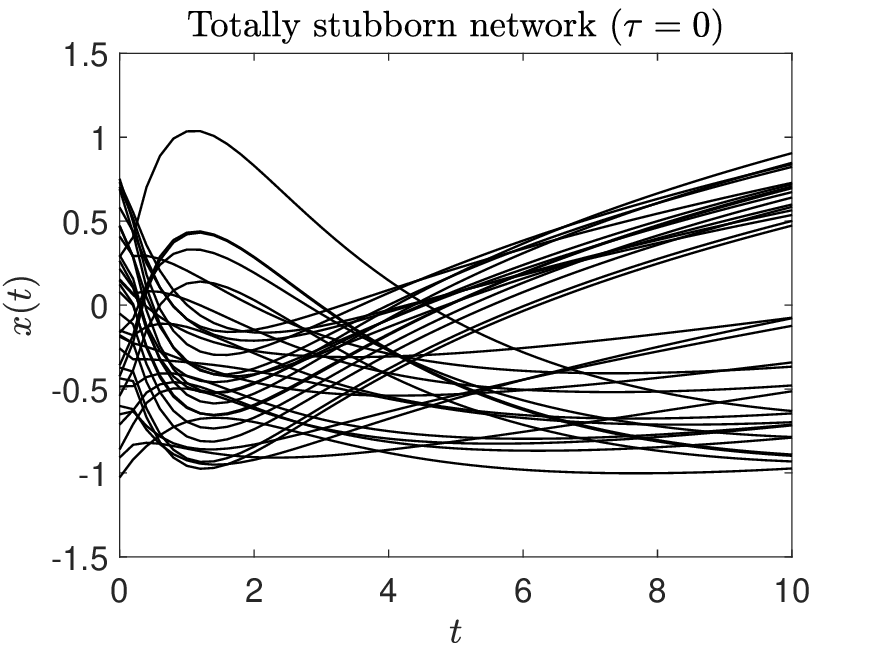}
\end{minipage}
\caption{Opinion trajectories associated with the game strategies for the stubborn and totally stubborn Zachary's network.}
\label{fig:stub}
\end{figure*}

Consider the non-stubborn Zachary's network with a fixed graph, illustrated in Fig.~\ref{fig:karate}. The resulting opinion trajectories from the receding horizon implementation in Algorithm~\ref{algofixed} are shown in Fig.~\ref{fig:fixedgraph}. Interestingly, for a delay-free network, the final opinions form two clusters, while introducing a small delay $\tau=0.4$ reduces the clusters to one, and for a slightly bigger delay $\tau=0.6$, a consensus is reached. 

\begin{figure*}
\centering
\begin{minipage}[b]{0.9\textwidth}
       \includegraphics[width=0.32\textwidth]{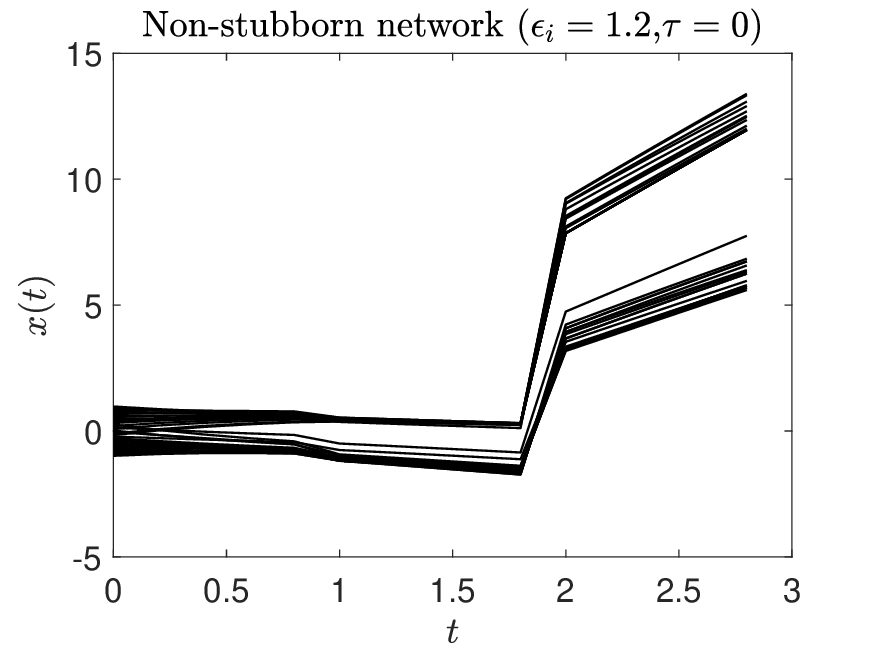}
       \includegraphics[width=0.32\textwidth]{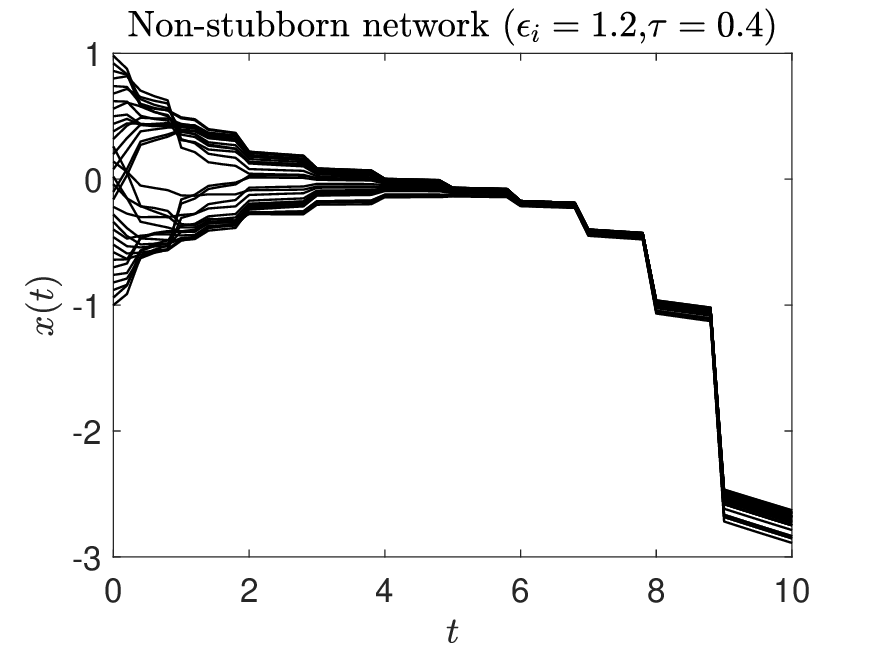}
       \includegraphics[width=0.32\textwidth]{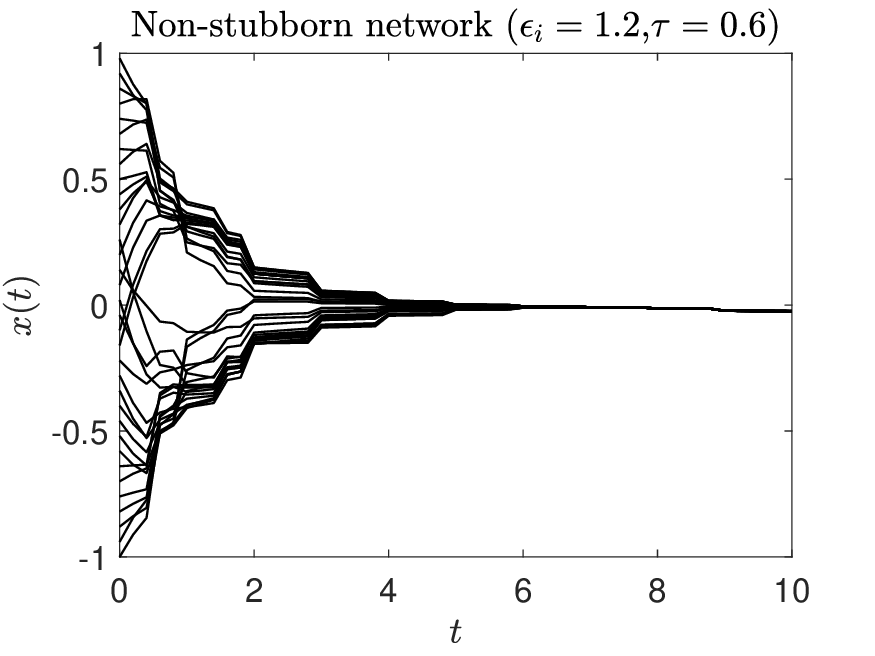}
\end{minipage}
\caption{Opinion trajectories associated with the game strategies for non-stubborn Zachary's network on a fixed graph.}
\label{fig:fixedgraph}
\end{figure*} 

For the non-stubborn Zachary's network on a complete graph, the resulting opinion trajectories from the receding horizon implementation in Algorithm~\ref{algocomplete} are shown in Fig.~\ref{fig:completegraph}. Although in a delay-free network, short-term clusters of opinions appear, the final opinions reach a consensus. In the presence of a delay, the consensus is realized smoothly.  

\begin{figure*}
\centering
\begin{minipage}[b]{0.9\textwidth}
       \includegraphics[width=0.32\textwidth]{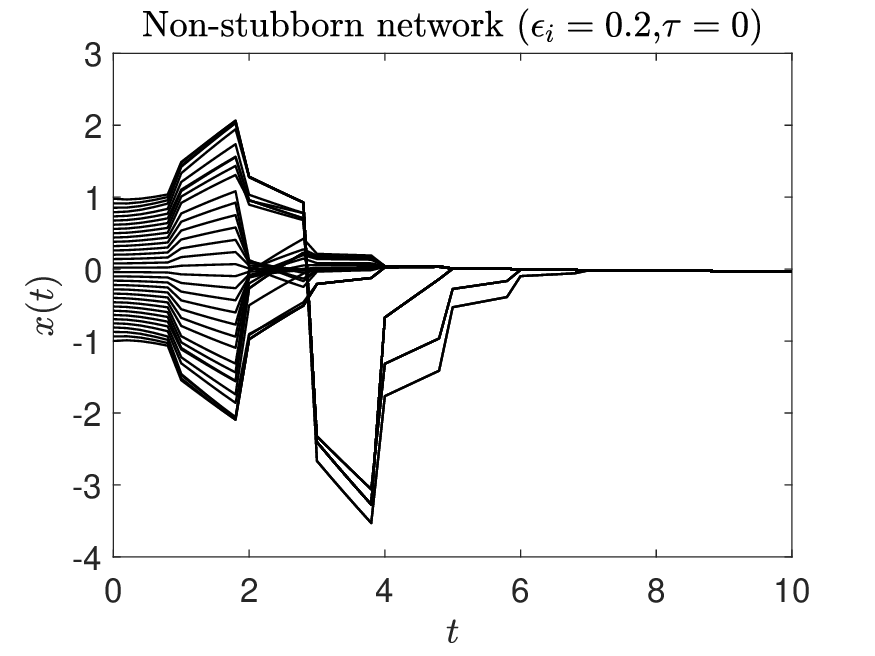}
       \includegraphics[width=0.32\textwidth]{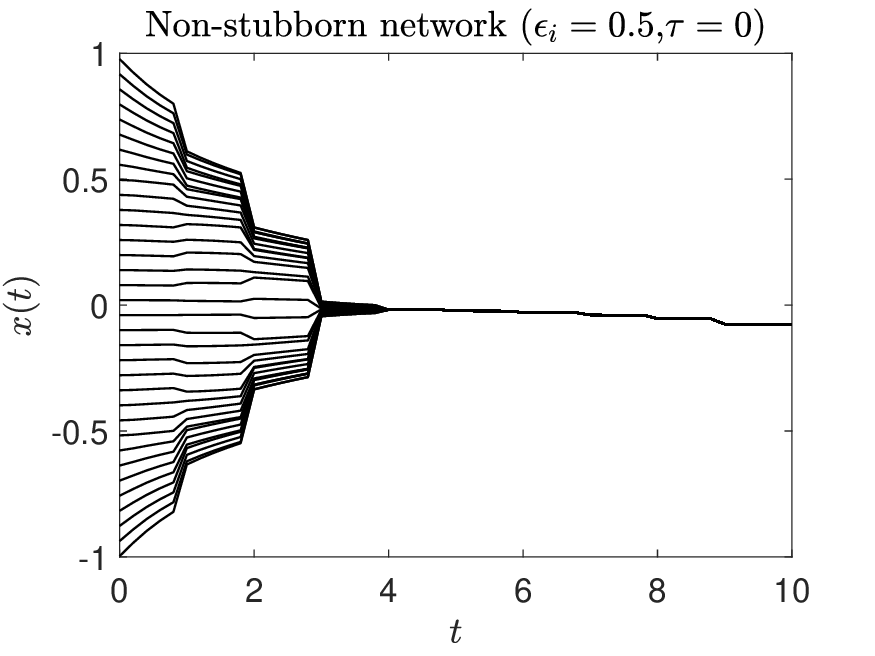}
       \includegraphics[width=0.32\textwidth]{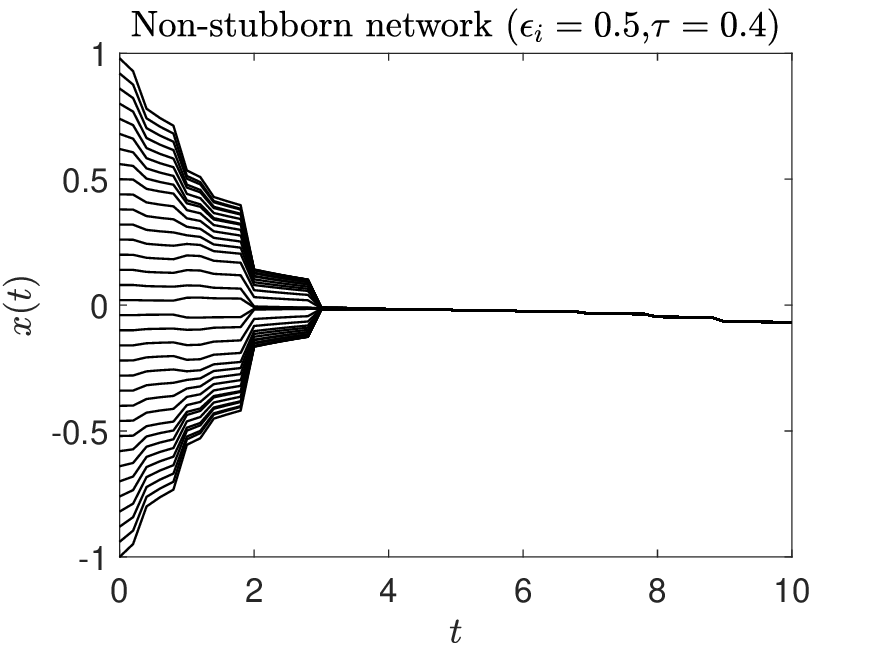}
\end{minipage}
\caption{Opinion trajectories associated with the game strategies for non-stubborn Zachary's network on a complete graph.}
\label{fig:completegraph}
\end{figure*} 

Finally, for the non-stubborn Zachary's network with a fixed graph and using the second neighborhood concept in graphs, the resulting opinion trajectories from the receding horizon implementation in Algorithm~\ref{algosecond} are shown in Fig.~\ref{fig:secondgraph}. I should note that multiple visits to the graph nodes are allowed at each execution of Algorithm~\ref{algosecond}. For a delay-free network, the final opinions polarize, while in a delayed network they reach a consensus. In the delay-free non-stubborn networks in Fig.\ref{fig:fixedgraph}-\ref{fig:secondgraph}, the confidence bound values are the lowest for which the network stays connected and the corresponding trajectories could be calculated. These values were experienced with multiple runs of their corresponding algorithms.  
\begin{figure*}
\centering
\begin{minipage}[b]{0.9\textwidth}
       \includegraphics[width=0.32\textwidth]{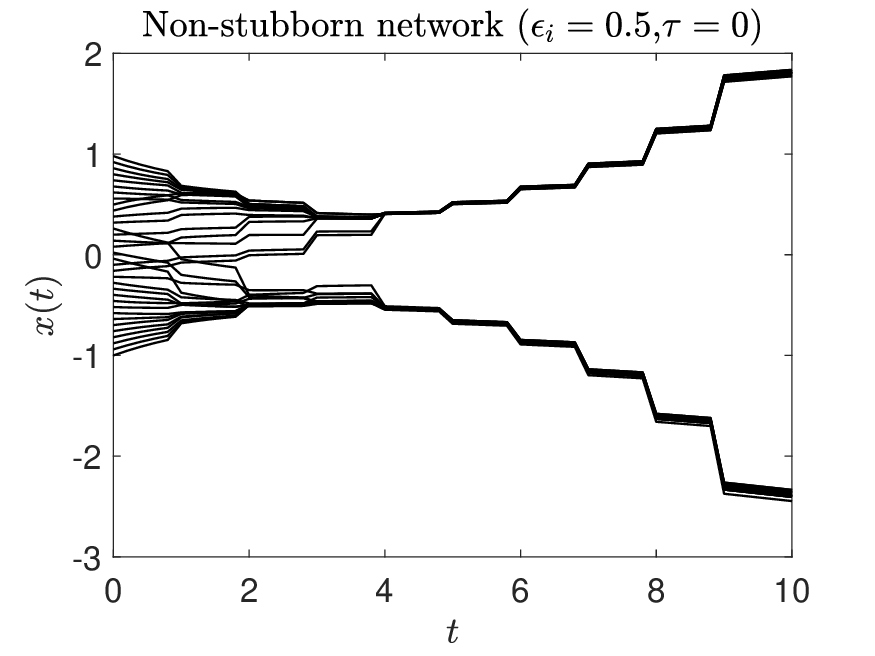}
       \includegraphics[width=0.32\textwidth]{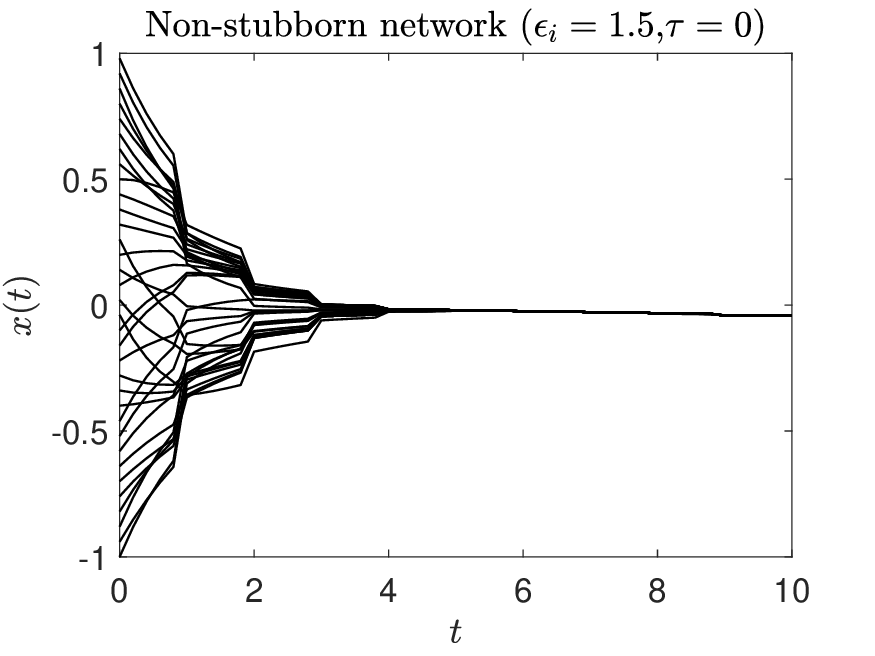}
       \includegraphics[width=0.32\textwidth]{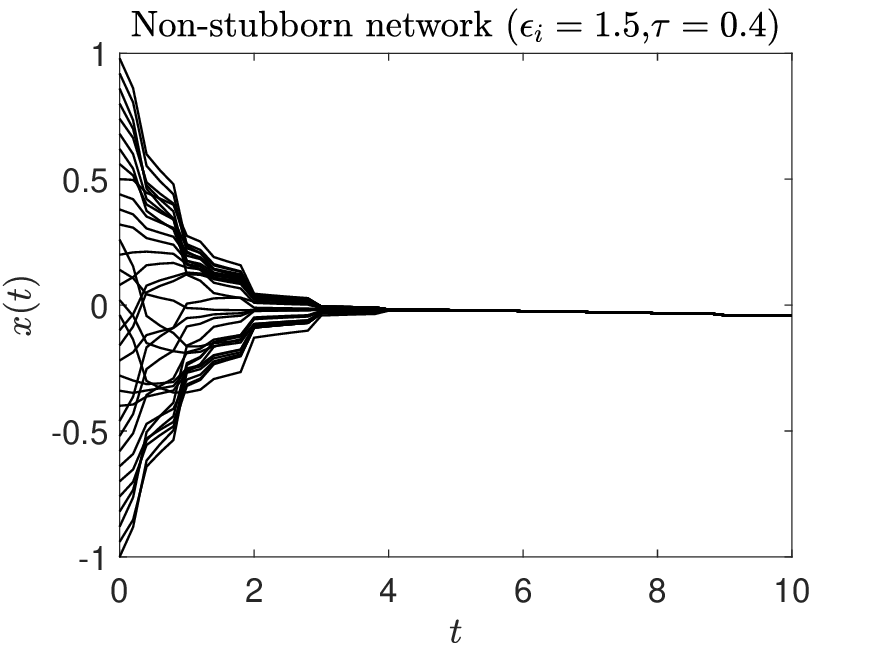}
\end{minipage}
\caption{Opinion trajectories associated with the game strategies for non-stubborn Zachary's network using the second neighborhood concept.}
\label{fig:secondgraph}
\end{figure*} 

\section{Conclusions}\label{conclusions}
In this paper, differential games were utilized to formulate opinion formation in non-stubborn and stubborn social networks, where the game's dynamics is the continuous-time Hegselmann-Krause model with a time delay in input. The explicit expressions for open-loop Nash equilibrium strategies and their associated opinion trajectories for both the non-stubborn and stubborn populations were found. Furthermore, feedback strategies were practiced for non-stubborn social networks through receding horizon implementations of open-loop game strategies. The receding horizon implementations of game strategies, depending on how big the confidence bounds and time delay are, show consensus, polarization, or clustering of opinions in the network. The receding horizon scheme for a stubborn network can be addressed in new work. A future direction of research could be analyzing the opinion dynamics in social networks on directed or signed graphs.

\section*{Acknowledgment}
This work was supported by the Czech Science Foundation (GACR) grant no. 23-07517S.

\end{document}